%
%
%
%
%
%
%
\documentclass[reprint,
 amsmath,amssymb,
 aps,
]{revtex4-1}
\usepackage{amsmath,amsthm,amssymb,amsfonts,verbatim}

\usepackage{graphicx}
\usepackage{dcolumn}
\usepackage{bm}


\newtheorem{lemma}{Lemma}
\newtheorem{thm}[lemma]{Theorem}
\newtheorem{cor}[lemma]{Corollary}

\newtheorem{defn}{Definition}

\newtheorem{rmk}{Remark}
\newtheorem{exm}{Example}

\renewcommand{\epsilon}{\varepsilon}
\renewcommand{\le}{\leqslant}
\renewcommand{\ge}{\geqslant}


\def\CC{\mathbb{C}}

\def\ZZ{\mathbb{Z}}

\def\MM{\mathbb{M}}

\def \Tr {{\rm Tr}}
\def\cA{\bar{A}}

\def \bc {{\bf c}}
\def \bx {{\bf x}}

\def \bu {{\bf u}}
\def \bv {{\bf v}}
\def \bo {{\bf 0}}

\def \rGF {{\rm GF}}
\def\bp{{\bar{\phi}}}

\newcommand{\Ga}{\alpha}
\newcommand{\Gb}{\beta}

\newcommand{\Ge}{\epsilon}

\newcommand{\Gl}{\lambda}

\newcommand{\Gv}{\varphi}

\begin{document}

\preprint{APS/123-QED}

\title{Multipartite entangled states, symmetric matrices and error-correcting codes}

\author{Keqin Feng}
\affiliation{
 Department of Mathematical Sciences, Tsinghua University, Beijing, China
}%

\author{Lingfei Jin}%
 \email{lfjin@fudan.edu.cn}
\affiliation{%
Shanghai Key Laboratory of Intelligent Information Processing, School of Computer Science, Fudan University, Shanghai 200433, China
}%


\author{Chaoping Xing and Chen Yuan}
\affiliation{
 School of Physical \& Mathematical Sciences, Nanyang Technological University, Singapore
}%

\date{\today}

\begin{abstract}
A pure quantum state is called $k$-uniform if all its reductions to $k$-qudit are maximally mixed. We investigate the general constructions of $k$-uniform pure quantum states of $n$ subsystems with $d$ levels. We provide one construction via symmetric matrices and the second one through classical error-correcting codes. There are three main results arising from our constructions. Firstly, we show that for any given even $n\ge 2$,  there always exists an $n/2$-uniform $n$-qudit quantum state of level $p$ for sufficiently large prime $p$. Secondly, both constructions show that
their exist $k$-uniform $n$-qudit pure quantum states such that $k$ is proportional to $n$, i.e., $k=\Omega(n)$ although the construction from symmetric matrices outperforms the one by error-correcting codes. Thirdly, our symmetric matrix construction provides a positive answer to the open question in \cite{DA} on whether there exists $3$-uniform $n$-qudit pure quantum state for all $n\ge 8$. In fact, we can further prove that, for every  $k$, there exists a constant $M_k$ such that there exists a $k$-uniform $n$-qudit quantum state for all $n\ge M_k$. In addition, by using concatenation of algebraic geometry codes, we give an explicit construction of $k$-uniform quantum state when $k$ tends to infinity.
\end{abstract}

\pacs{Valid PACS appear here}
\maketitle


\section{Introduction}
Quantum entanglement appears in many areas of quantum information theory including quantum communications~\cite{Be1993,Be1996,Sch1995}, quantum computing~\cite{Joz1997,Joz2002,Vir2005} and quantum key distribution~\cite{Koa2004}. Quantum entanglement theory is developed to determine which states are entangled and which are separable. In bipartite entanglement, the simplest is quantum bipartite pure state. To determine whether this pure state is separable, we just diagonalize its reduced density matrix. But it is still NP-hard to determine separability in bipartite system~\cite{Gur}. In general problem of multipartite entanglement, besides separability and entanglement, there are many types of partial separability which complicates this problem. Although there are some attempts to detect genuine multipartite entanglement~\cite{Hub1,Hub2}, there are still many open problems in this area.

One of the intriguing problem is to investigate highly entangled states of several qubits\cite{Gis,Hig,Ken,Brow,Ost,Bor,Br,Mar,Tam}.
 In \cite{Fac1,Fac2}, they considered the one qubit reduced state which is maximally mixed. This idea was further developed by Arnaud and Cerf \cite{Arn}. They proposed the concept of $k$-multipartite maximally entangled pure states or $k$-uniform for short, i.e., any $k$-partite reduced state is maximally mixed.

 It was shown in \cite{Sc} that an $n$-qudit pure quantum state $|\Phi\rangle$ of level $d$ is $k$-uniform if and only if $|\Phi\rangle$ is a pure $((n,1,k+1))_d$ quantum error-correcting code. Using this connection the author was able to construct some $k$-uniform pure quantum states through stabilizer quantum codes obtained from classical self-dual codes. In \cite{DA}, a connection between $k$-uniform pure quantum states and orthogonal arrays was established and several classes of $k$-uniform states were constructed. More precisely speaking, the following result were obtained in \cite{DA}.
 \begin{itemize}
\item There exist $k$-uniform $(d+1)$-qudits states of  $d$ levels whenever $d\geq 2$ and $k\leq \frac{d+1}{2}$.
\item There exist $2$-uniform $n$-qudits states of  $2$ levels whenever  $n\ge 5$.
\item There exist $3$-uniform $(2^m+2)$-qudits states of  $2^m$ levels whenever  $m\ge 2$.
\item There exist $2^m-1$-uniform $(2^m+2)$-qudits states of  $2^m$ levels whenever  $m=2$ or $4$.
\end{itemize}
 In addition, some $k$-uniform $n$-qudits states of  $d$ levels were also given for some small values of $k,n,d$. The above special values of the parameters $k,n,d$ are obtained due to constraint from combinatorial structure of orthogonal arrays.

 In this paper, we first provide an equivalent definition for $k$-uniform quantum states through a map from $\ZZ_d^n$ to the complex numbers $\CC$. Based on this equivalent definition, we first derive a construction of $k$-uniform quantum states by using symmetric matrices. Again starting from this  equivalent definition, we present the second construction that makes use of classical error-correcting codes with good minimum distance and dual distance. There are three main results arising from our constructions. Firstly, we show that for any given even $n\ge 2$,  there always exists an $n/2$-uniform $n$-qudit quantum state of level $p$ for sufficiently large prime $p$. Secondly, both constructions show that
their exist $k$-uniform $n$-qudit pure quantum states such that $k$ is proportional to $n$, i.e., $k=\Omega(n)$ although the construction from symmetric matrices outperforms the one by error-correcting codes. Thirdly, our symmetric matrix construction provides a positive answer to the open question in \cite{DA} on whether there exists $3$-uniform $n$-qudit pure quantum states for all $n\ge 8$. In fact, we can further prove that, for every  $k$, there exists a constant $M_k$ such that there exists a $k$-uniform $n$-qudit quantum state for all $n\ge M_k$. In addition, by using concatenation of algebraic geometry codes, we give an explicit construction of $k$-uniform quantum state when $k$ tends to infinity. Both numeric and theoretic results reveal that the matrix construction is in general better than the one by classical error-correcting codes.

 The paper is organized as follows. In Section 2, we introduce basic definition of $k$-uniform quantum states and present an equivalent definition. By this  equivalent  definition, we propose two different constructions of $k$-uniform quantum states in Section 3. In Section 4, we investigate  the case where $n$ is small by presenting  some tables and a few other results. In the last section,  we discuss the case where $n$ tends to infinity, i.e., its asymptotic behavior through our construction. In addition, in this section we also provide an explicit construction of $k$-uniform quantum states based on our construction through error-correcting codes.

\section{Preliminaries on $k$-uniform quantum state}

\subsection{Definition}

A $k$-uniform $n$-qudit quantum state has the
property that, after tracing out all but $k$ qudits,  we are left
with the maximally mixed state for any $k$-tuple of qudits. This means that  all information about the system is lost after removal
of $n-k$ or more parties. Precisely speaking, a pure
quantum state of $n$ subsystems of  level $d$  is called $k$-uniform (or $k$-maximally entangled) if every reduction to
$k$ qudits is maximally mixed. Let us give a mathematical definition.

The density matrix of a quantum state  $|\Phi\rangle=\sum_{\bc\in\ZZ_d^n}\phi_{\bc}|\bc\rangle$ is defined by $\rho:=\sum_{\bc,\bc'\in\ZZ_d^n}\phi_{\bc}\bp_{\bc}|\bc\rangle\langle\bc'|$. For a subset $A$ of $\{1,2,\dots,n\}$ and a vector $\bc\in\ZZ_d^n$, we denote by $\bc_A$ the projection of $\bc$ at $A$.  The reduction of $|\Phi\rangle$ to $A$ has the density matrix $\rho_A:=\sum_{\bc,\bc'\in\ZZ_d^n}\phi_{\bc}\bp_{\bc}\langle\bc_{\cA}|\bc'_{\cA}\rangle|\bc_A\rangle\langle\bc'_A|$, where $\cA$ is the complement set of $A$ (i.e., $\cA=\{1,2,\dots,n\}\setminus A$) and $\langle\bc_{\cA}|\bc'_{\cA}\rangle$ is defined to be $1$ if $\bc_{\cA}=\bc'_{\cA}$ and $0$ otherwise.
\begin{defn}\label{defn:1} {\rm A pure quantum state  $|\Phi\rangle=\sum_{\bc\in\ZZ_d^n}\phi_{\bc}|\bc\rangle$ is called $k$-uniform if for any subset $A$ of $\{1,2,\dots,n\}$, the reduction of $|\Phi\rangle$ to $A$ has the density matrix $\rho_A=\Ga_A\sum_{\bc_A\in\ZZ_d^k}|\bc_A\rangle\langle\bc_A|$, where $\Ga_A\in\CC$ depends only on $A$ and  $|\Phi\rangle$.
}\end{defn}

\begin{exm}{\rm Consider $5$-qudit quantum state of level $2$
\begin{eqnarray*}|\Phi\rangle&=&-|00000\rangle+|01111\rangle-|10011\rangle+|11100\rangle\\&+&|00110\rangle+|01001\rangle+
|10101\rangle+|11010\rangle.
\end{eqnarray*}
Let $A=\{3,4\}$. Then an easy computation shows that the density matrix $\rho_A$ is $2|00\rangle\langle00|+2|01\rangle\langle01|+2|10\rangle\langle10|+2|11\rangle\langle11|$. One can also verify that the density matrix $\rho_A$ has the same form for all other subsets $A$ with $|A|=2$. By definition, $|\Phi\rangle$ is $2$-uniform.
}\end{exm}

The well-known Greenberger-Horne-Zeilinger  states belong
to the class $1$-uniform, while W states do not belong to any class of
k-uniform states. For a state  $|\Phi\rangle$,   a multipartite entanglement measures $Q_k( |\Phi\rangle)$ was defined \cite{Sc}. The original
Meyer-Wallach measure $Q_1( |\Phi\rangle)$ is actually the average entanglement between
individual qudits and the rest. As $k$
increases,  $Q_k( |\Phi\rangle)$ is getting more sensitive to correlations
of an increasingly global nature. $Q_k( |\Phi\rangle)$ is upper bounded by $1$. It was proved in \cite[Propsoition 2]{Sc} that  $|\Phi\rangle$ is $k$-uniform if and only if $Q_k( |\Phi\rangle)=1$.
\subsection{An equivalent definition}
An $n$-qudit quantum state
$|\Phi\rangle=\sum_{\bc\in\ZZ_d^n}\phi_{\bc}|\bc\rangle$ of level $d$  is associated with a map $\varphi$ from $\ZZ_d^n$ to $\CC$ given by $\varphi(\bc)=\phi_{\bc}$. This means that  $n$-qudit quantum states of level $d$ are identified with maps from $\ZZ_d^n$ to $\CC$. Thus,  an $n$-qudit state $|\Phi\rangle$ can be written as $\sum_{\bc\in\ZZ_d^n}\varphi(\bc)|\bc\rangle$ for a given function $\varphi$.  A $k$-uniform  quantum state can be described in terms of its associated map $\varphi$.
\begin{lemma}\label{lm:1}
An $n$-qudit state $|\Phi\rangle=\sum_{\bc\in\ZZ_d^n}\varphi(\bc)|\bc\rangle$ is $k$-uniform if and only if
\begin{itemize}
  \item [(i)] $\varphi$ is not identical to zero.
  \item [(ii)] For any subset $A$ of $\{1,2,\cdots,n\}$ with  $|A|=k$, and every $c_A,c'_{A}\in \ZZ_d^k$, one has
      \[\sum_{\bc_{\cA}\in\ZZ_d^{n-k}}\overline{\varphi(c_A,c_{\cA})}\varphi(c'_A,c_{\cA})=\left\{
        \begin{array}{ll}
         0, & \mbox{if $c_A\neq c'_A$,} \\
          \frac{\langle \Phi|\Phi\rangle}{d^k}, & \mbox{if $c_A=c'_A$.}
        \end{array}
      \right.\]
\end{itemize}
\end{lemma}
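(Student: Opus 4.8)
The plan is to work directly from Definition~\ref{defn:1} by expanding the reduced density matrix $\rho_A$ into its matrix entries in the basis $\{|\bc_A\rangle\}_{\bc_A\in\ZZ_d^k}$ and matching them against the two cases in (ii). First I would substitute the index split $\bc=(\bc_A,\bc_{\cA})$, $\bc'=(\bc'_A,\bc'_{\cA})$ into $\rho_A=\sum_{\bc,\bc'}\varphi(\bc)\overline{\varphi(\bc')}\langle\bc_{\cA}|\bc'_{\cA}\rangle|\bc_A\rangle\langle\bc'_A|$ and use that $\langle\bc_{\cA}|\bc'_{\cA}\rangle$ forces $\bc_{\cA}=\bc'_{\cA}$, collapsing the sum to
\[
\rho_A=\sum_{\bc_A,\bc'_A\in\ZZ_d^k}\Big(\sum_{\bc_{\cA}\in\ZZ_d^{n-k}}\varphi(\bc_A,\bc_{\cA})\overline{\varphi(\bc'_A,\bc_{\cA})}\Big)|\bc_A\rangle\langle\bc'_A|.
\]
Reading off the $(\bc_A,\bc'_A)$ entry, I see that the inner sum is exactly the complex conjugate of the left-hand side of (ii). Hence the assertion ``$\rho_A$ is a scalar multiple of $\sum_{\bc_A}|\bc_A\rangle\langle\bc_A|$'' is equivalent to ``all off-diagonal entries ($\bc_A\neq\bc'_A$) vanish and all diagonal entries ($\bc_A=\bc'_A$) equal one common constant $\Ga_A$.''

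The one remaining point is to pin down the constant $\Ga_A$, which I would do by a trace computation. Taking the trace of the displayed expression gives $\Tr(\rho_A)=\sum_{\bc_A}\sum_{\bc_{\cA}}|\varphi(\bc_A,\bc_{\cA})|^2=\sum_{\bc}|\varphi(\bc)|^2=\langle\Phi|\Phi\rangle$; on the other hand, if $\rho_A=\Ga_A\sum_{\bc_A}|\bc_A\rangle\langle\bc_A|$ then $\Tr(\rho_A)=\Ga_A d^k$. Equating the two forces $\Ga_A=\langle\Phi|\Phi\rangle/d^k$ for every $A$, so in particular $\Ga_A$ is real, independent of $A$, and equal to the value demanded in the $\bc_A=\bc'_A$ case of (ii). Since this diagonal value is real, I may pass freely between the inner sum above and the conjugated expression written in (ii). This settles both directions at once: if $|\Phi\rangle$ is $k$-uniform, matching entries yields (ii), and conversely (ii) reassembles $\rho_A$ as $\frac{\langle\Phi|\Phi\rangle}{d^k}\sum_{\bc_A}|\bc_A\rangle\langle\bc_A|$, which is the form required by Definition~\ref{defn:1}.

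Condition (i) I would handle separately: a genuine pure state is nonzero, so $\varphi\not\equiv 0$, while in the backward direction (i) is needed to exclude the degenerate map $\varphi\equiv 0$, which satisfies (ii) trivially with both sides zero yet represents no state. I do not anticipate a real obstacle, as this is elementary linear algebra; the only steps demanding care are keeping the index split $\bc=(\bc_A,\bc_{\cA})$ and the conjugation convention consistent, and the trace argument identifying $\Ga_A$ with $\langle\Phi|\Phi\rangle/d^k$, which is precisely what upgrades the a priori subset-dependent constants into the single explicit value appearing in the lemma.
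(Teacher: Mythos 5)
Your proposal is correct and follows essentially the same route as the paper's proof: expand $\rho_A=\Tr_{\cA}(\rho)$ in the basis $\{|\bc_A\rangle\langle\bc'_A|\}$, collapse the sum via $\langle\bc_{\cA}|\bc'_{\cA}\rangle$, match entries against proportionality to the identity, and fix the constant by the trace $\Tr(\rho_A)=\langle\Phi|\Phi\rangle$. If anything, you are more explicit than the paper (which dispatches the converse with ``vice versa'') about the conjugation convention, the backward direction, and the role of condition (i).
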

\begin{proof}{\rm
If $|\Phi\rangle$ is $k$-uniform, by tracing out any $n-k$ qudits, the $k$-qudit reduced density matrix is proportional to identity matrix.
We fix a subset $A$ of $\{1,2,\cdots,n\}$ with  $|A|=k$.  An $n$-qudit state $|\Phi\rangle=\sum_{\bc\in\ZZ_d^n}\varphi(\bc)|\bc\rangle$ is written as
\begin{equation*}
|\Phi\rangle=\sum_{\bc_A\in\ZZ_d^k,\bc_{\cA}\in \ZZ_d^{n-k}}\varphi(\bc_A,\bc_{\cA})|\bc_A\rangle\langle\bc_{\cA}|.
\end{equation*}
Denote by $\rho$ the density matrix of $|\Phi\rangle$, \emph{i.e.,} $\rho=|\Phi\rangle\langle\Phi|$.
Consider the reduced state
\begin{eqnarray*}
\rho_A&=&{\rm Tr}_{\cA}(\rho)\\&=&\sum_{\bc_A,\bc_A'\in \ZZ_d^{k}}|\bc_A\rangle\langle\bc_A'|\sum_{\bc_{\cA},\bc_{\cA}'\in \ZZ_d^{k}}\overline{\varphi(c_A,c_{\cA})}\varphi(c'_A,c_{\cA}')\langle \bc_{\cA}|\bc_{\cA}'\rangle \\
&=&\sum_{\bc_A,\bc_A'\in \ZZ_d^{k}}|\bc_A\rangle\langle\bc_A'|\sum_{\bc_{\cA}\in \ZZ_d^{k}}\overline{\varphi(c_A,c_{\cA})}\varphi(c'_A,c_{\cA})
\end{eqnarray*}
Since $|\Phi\rangle$ is $k$-uniform, the reduced state $\rho_A$ is proportional to identity matrix. The sum of diagonal element of $\rho_A$ is $\langle \Phi|\Phi\rangle$ which implies that
\[\sum_{\bc_{\cA}\in\ZZ_d^{n-k}}\overline{\varphi(c_A,c_{\cA})}\varphi(c'_A,c_{\cA})=\left\{
        \begin{array}{ll}
         0, & \mbox{if $c_A\neq c'_A$,} \\
          \frac{\langle \Phi|\Phi\rangle}{d^k}, & \mbox{if $c_A=c'_A$.}
        \end{array}
\right.\]
Vice versa, we have the desired result.
}\end{proof}

\section{Constructions of $k$-uniform quantum states}
Before starting our first construction, we prove a lemma.
\begin{lemma}\label{lm:b2} Let $d\ge 2$  be an integer. Let $a_1,a_2,\dots,a_m$ be $m$ integers. Assume that $\gcd(a_1,a_2,\dots,a_m,d)=e<d$. Then for every $b\in\ZZ_{d/e}$ the equation $a_1x_1+a_2x_2+\dots+a_mx_m\equiv be\bmod{d}$ has exactly $ed^{m-1}$ solutions in $\ZZ_d^m$.
\end{lemma}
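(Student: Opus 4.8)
The plan is to reduce the given congruence to one with coprime coefficients and then count solutions by a fiber argument. First I would factor out the common divisor $e$: since $e=\gcd(a_1,\dots,a_m,d)$ divides every $a_i$ and also $d$, I write $a_i=ea_i'$ and $d=ed'$ with $d'=d/e$. From $\gcd(ea_1',\dots,ea_m',ed')=e\cdot\gcd(a_1',\dots,a_m',d')=e$ I get $\gcd(a_1',\dots,a_m',d')=1$. Dividing the target value $be$ and the modulus $d=ed'$ through by $e$, the congruence $\sum_i a_ix_i\equiv be\pmod d$ becomes equivalent to $\sum_i a_i'x_i\equiv b\pmod{d'}$, because $e\sum_i a_i'x_i-eb$ is a multiple of $ed'$ exactly when $\sum_i a_i'x_i-b$ is a multiple of $d'$.

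Next I would separate the two scales $d$ and $d'=d/e$. The reduced congruence depends on each $x_i$ only through its residue $\bar x_i=x_i\bmod d'$. Since $d=ed'$, every residue class modulo $d'$ is represented by exactly $e$ of the values $x_i\in\{0,1,\dots,d-1\}$. Hence the number of solutions in $\ZZ_d^m$ equals $e^m$ times the number $N$ of solutions $(\bar x_1,\dots,\bar x_m)\in\ZZ_{d'}^m$ of $\sum_i a_i'\bar x_i\equiv b\pmod{d'}$.

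Then I would compute $N$ via a group homomorphism. Consider $\psi\colon\ZZ_{d'}^m\to\ZZ_{d'}$ sending $(\bar x_1,\dots,\bar x_m)$ to $\sum_i a_i'\bar x_i$. This is a homomorphism of finite abelian groups whose image is the subgroup generated by $a_1',\dots,a_m'$, namely $\gcd(a_1',\dots,a_m',d')\,\ZZ_{d'}=\ZZ_{d'}$, since that gcd is $1$. Thus $\psi$ is surjective, so all of its fibers have the same cardinality $|\ZZ_{d'}^m|/|\ZZ_{d'}|=(d')^{m-1}$; in particular $N=(d')^{m-1}$ for every $b$.

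Finally, combining the two counts gives $e^m(d')^{m-1}=e^m(d/e)^{m-1}=e\,d^{m-1}$, as claimed. The only step that needs care is the first one: making the division by $e$ rigorous and checking that the coprimality $\gcd(a_1',\dots,a_m',d')=1$ genuinely follows from $e$ being the \emph{full} gcd. Everything afterward is bookkeeping together with the standard fact that a surjective homomorphism of finite groups has equal-sized fibers.
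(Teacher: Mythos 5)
Your proof is correct, but it takes a different route from the paper's. The paper works entirely inside $\ZZ_d^m$: writing $g=\gcd(a_1,\dots,a_m)$ and using B\'ezout to find $u_1,\dots,u_m$ with $\sum_i a_iu_i=g$ and then $c$ with $cg\equiv e\pmod d$, it translates any solution of $\sum_i a_ix_i\equiv 0$ by $b\cdot(cu_1,\dots,cu_m)$ to get a bijection between the fiber over $0$ and the fiber over $be$; since $e$ divides every $a_i$ and $d$, only the $d/e$ values $be$ are attained, so $d^m=(d/e)N_0$ and $N_0=ed^{m-1}$. You instead factor out $e$ first, reduce to the congruence $\sum_i a_i'\bar x_i\equiv b\pmod{d'}$ with $\gcd(a_1',\dots,a_m',d')=1$, observe the resulting homomorphism $\ZZ_{d'}^m\to\ZZ_{d'}$ is surjective so its fibers are kernel cosets of size $(d')^{m-1}$, and then multiply by $e^m$ for the lift from $\ZZ_{d'}$ back to $\ZZ_d$. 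The two arguments rest on the same underlying fact (equidistribution of fibers, via translation by a preimage), but yours packages it as standard group theory after a clean reduction, which makes the surjectivity and the restriction to multiples of $e$ completely transparent; the paper's version avoids the change of modulus at the cost of an ad hoc bijection (and, as written, contains some slips, e.g.\ the target should be $be$ rather than $b$ and the final sum should run over $b\in\ZZ_{d/e}$). Your one flagged point of care --- that $\gcd(a_1',\dots,a_m',d')=1$ follows from pulling the factor $e$ out of the full gcd --- is handled correctly by the identity $\gcd(ea_1',\dots,ea_m',ed')=e\gcd(a_1',\dots,a_m',d')$.
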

\begin{proof}{\rm For $b\in\ZZ_{d/e}$, we denote by $N_b$ the number of solutions $\bx=(x_1,x_2,\dots,x_m)\in\ZZ_d^m$ of   $a_1x_1+a_2x_2+\dots+a_mx_m\equiv be\bmod{d}$.  We claim that $N_b=N_0$ for any $b\in\ZZ_{d/e}$.  Denote by $g$ the greatest common divisor $\gcd(a_1,a_2,\dots,a_m)$. Then one can find integers $u_1,u_2,\dots,u_m$ such that $a_1u_1+a_2u_2+\dots+a_mu_m=g$.
Since $\gcd(g,d)=e$, we can find $c$ such that $cg\equiv e\bmod{d}$. Thus,  $a_1(cu_1)+a_2(cu_2)+\dots+a_m(cu_m)=cg\equiv e\bmod{d}$. If $\bu\in\ZZ_d^m$ is a solution of   $a_1x_1+a_2x_2+\dots+a_mx_m\equiv 0\bmod{d}$, then $\bv+(bcu_1,bcu_2,\dots,bcu_m)$ is a solution of  $a_1x_1+a_2x_2+\dots+a_mx_m\equiv b\bmod{d}$. This implies that $N_0\le N_b$. On the other hand, if $\bv\in\ZZ_d^m$ is a solution of   $a_1x_1+a_2x_2+\dots+a_mx_m\equiv b\bmod{d}$, then $\bv-(bcu_1,bcu_2,\dots,bcu_m)$ is a solution of  $a_1x_1+a_2x_2+\dots+a_mx_m\equiv 0\bmod{d}$. This implies that $N_b\le N_0$. Now we have $d^m=\sum_{b\in\ZZ_d}N_d=\frac de\times N_0$. The desired result follows.
}\end{proof}

 Based on Lemma \ref{lm:1}, we first provide a construction of $k$-uniform quantum state through symmetric matrices. Our map $\varphi$ is in fact a quadratic function.
Let $\zeta_d$ denote a $d$th primitive root of unity in $\CC$. For two subsets $A,B\subseteq \{1,2,\dots,n\}$ and a matrix $H=(h_{ij})\in \MM_{n\times n}(\ZZ_d)$, we denote by  $H_{A\times B}$ the submatrix $(h_{ij})_{i\in A,j\in B}$. An $n\times n$ matrix $E$ over $\ZZ_d$ is called invertible if there exists an $n\times n$ matrix $G$ over $\ZZ_d$ such that $EG=GE$ is equal to the identity matrix. It is well known that $E$  is  invertible if and only if the determinate of $E$ is co-prime with $d$. If $E$  is  invertible, then for any nonzero vector $\bc\in\ZZ_d^n$, we must have $\bc E\neq\bo$. Otherwise, one would have $\bo=\bo E^{-1}=\bc EE^{-1}=\bc$.
\begin{thm}\label{thm:a1}
If there is a zero diagonal symmetric matrix $H\in\MM_{n\times n}(\ZZ_d)$ such that for any subset $A$ of $\{1,2,\dots,n\}$ with $|A|=k$, there exists a subset $B$ of $\cA$  with $|B|=k$ such that the submatrix $H_{A\times B}$ is a $k\times k$ invertible matrix over $\ZZ_d$ , then the  $n$-qudit state $|\Phi\rangle=\sum_{\bc\in\ZZ_d^n}\varphi(\bc)|\bc\rangle$ is $k$-uniform with $\varphi(\bc)=\zeta_d^{\bc \widetilde{H}\bc^T}$, where $\widetilde{H}=(\widetilde{h}_{ij})$ with $\widetilde{h}_{ij}=h_{ij}$ for $i<j$ and $0$ otherwise.
\end{thm}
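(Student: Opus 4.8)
The plan is to verify the two conditions of Lemma \ref{lm:1} for the map $\varphi(\bc)=\zeta_d^{\bc\widetilde H\bc^T}$. Condition (i) is immediate: every value $\varphi(\bc)$ is a power of the primitive root of unity $\zeta_d$, hence nonzero, so $\varphi$ does not vanish identically. All the content therefore lies in condition (ii), which I would recast as the evaluation of a character (exponential) sum over the complementary coordinates.

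Fix $A$ with $|A|=k$ and write $Q(\bc)=\bc\widetilde H\bc^T=\sum_{i<j}h_{ij}c_ic_j$; note the diagonal of $H$ is irrelevant since $\widetilde H$ ignores it. Splitting each product according to whether its two indices lie in $A$ or in $\cA$ gives a decomposition $Q(\bc)=Q_A(\bc_A)+Q_{\cA}(\bc_{\cA})+\bc_A H_{A\times\cA}\bc_{\cA}^T$, where the cross term acquires coefficient matrix $H_{A\times\cA}$ precisely because $H$ is symmetric (a pair $\{i,j\}$ with $i\in A$, $j\in\cA$ contributes $h_{ij}$ regardless of which index is smaller). Since $\overline{\varphi(\bc)}=\zeta_d^{-Q(\bc)}$, in the sum of condition (ii) the $Q_{\cA}$ contributions cancel, and the exponent becomes
\[ Q(c'_A,c_{\cA})-Q(c_A,c_{\cA})=\big(Q_A(c'_A)-Q_A(c_A)\big)+(c'_A-c_A)H_{A\times\cA}c_{\cA}^T, \]
an affine function of $c_{\cA}$ whose constant term depends only on $c_A,c'_A$.

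The next step is to evaluate the sum. Pulling out the constant factor $\zeta_d^{Q_A(c'_A)-Q_A(c_A)}$, the sum over $c_{\cA}\in\ZZ_d^{n-k}$ factorizes coordinatewise into a product of geometric sums $\sum_{t\in\ZZ_d}\zeta_d^{w_jt}$, where $\mathbf{w}=(c'_A-c_A)H_{A\times\cA}$. By primitivity of $\zeta_d$, each factor equals $d$ when $w_j\equiv 0\pmod d$ and $0$ otherwise, so the whole sum is $d^{n-k}$ if $\mathbf{w}=\bo$ and $0$ if $\mathbf{w}\neq\bo$. When $c_A=c'_A$ we have $\mathbf{w}=\bo$ and constant factor $1$, giving $d^{n-k}$; since $\langle\Phi|\Phi\rangle=\sum_{\bc}|\varphi(\bc)|^2=d^n$, this is exactly $\langle\Phi|\Phi\rangle/d^k$, matching the required diagonal value.

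The crux is the remaining case $c_A\neq c'_A$, where I must show $\mathbf{w}=(c'_A-c_A)H_{A\times\cA}\neq\bo$; this is where the hypothesis is used. Put $\bu=c'_A-c_A\neq\bo$. By assumption there is $B\subseteq\cA$ with $|B|=k$ and $H_{A\times B}$ invertible, and $H_{A\times B}$ is the submatrix of $H_{A\times\cA}$ obtained by keeping the columns indexed by $B$; hence $\bu H_{A\times B}$ is the restriction of $\mathbf{w}$ to those columns. Invertibility forces $\bu H_{A\times B}\neq\bo$ (using the fact, recorded before the theorem, that $\bu E\neq\bo$ for nonzero $\bu$ when $E$ is invertible), so $\mathbf{w}\neq\bo$ and the sum vanishes, completing condition (ii). The only point demanding care is the index bookkeeping that lets the cross term be read off as $H_{A\times\cA}$; once that is in place, the invertibility assumption is tailored exactly to annihilate the off-diagonal character sum, and nothing harder is needed.
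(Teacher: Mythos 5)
Your proposal is correct and follows essentially the same route as the paper: verify the two conditions of Lemma \ref{lm:1}, split the quadratic form so that symmetry of $H$ turns the cross term into $\bc_A H_{A\times\cA}\bc_{\cA}^T$, and use invertibility of $H_{A\times B}$ to force the linear form $(\bc'_A-\bc_A)H_{A\times\cA}$ to be nonzero. The only (harmless) difference is in evaluating the resulting exponential sum: you factorize it coordinatewise into geometric series, each vanishing unless the corresponding coefficient is $0 \bmod d$, while the paper instead counts solutions of the linear congruence via its Lemma \ref{lm:b2} and sums $\zeta_d^{be}$ over $b\in\ZZ_{d/e}$; both computations are valid and give the same conclusion.
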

\begin{proof}{\rm Consider the map $f$ from $\ZZ_d^n$ to $\ZZ_d^n$ given by $f(\bc)= {\bc \widetilde{H}\bc^T}$. Then for every subset $A$ of  $\{1,2,\cdots,n\}$ with $|A|=k$, we have
\begin{eqnarray*}
f(\bc_A,\bc_{\cA})&=&\bc_A(\widetilde{H}_{A\times {\cA}}+\widetilde{H}_{{\cA}\times A}^{T})\bc_{\cA}^T+\bc_A\widetilde{H}_{A\times A}\bc_A^T+\bc_{\cA}\widetilde{H}_{{\cA}\times {\cA}}\bc_{\cA}^T \\ &=&\bc_A H_{A\times {\cA}}\bc_{\cA}^T+\bc_A\widetilde{H}_{A\times A}\bc_A^T+\bc_{\cA}\widetilde{H}_{{\cA}\times {\cA}}\bc_{\cA}^T.
\end{eqnarray*}
Hence,
\begin{eqnarray*}\label{eq:2.1}&&f(\bc_A,\bc_{\cA})-f(\bc'_A,\bc_{\cA})\\&=&(\bc_A-\bc'_A){H}_{A\times {\cA}}\bc_{\cA}^T+\bc_A\widetilde{H}_{A\times A}\bc_A^T-\bc'_A\widetilde{H}_{A\times A}(\bc'_A)^T.\end{eqnarray*}
If $\bc_A=\bc'_A$, one has
\begin{eqnarray*}\sum_{\bc_{\cA}\in\ZZ_d^{n-k}}\overline{\varphi(c_A,c_{\cA})}\varphi(c'_A,c_{\cA})&=&\sum_{\bc_{\cA}\in\ZZ_d^{n-k}}\zeta_d^{f(\bc_A,\bc_{\cA})}
\zeta_d^{-f(\bc_A,\bc_{\cA})}\\&=&d^{n-k}= \frac{\langle \Phi|\Phi\rangle}{d^k}.\end{eqnarray*}
Note that $\langle \Phi|\Phi\rangle=d^n$.

If $\bc_A\neq\bc'_A$, then $(\bc_A-\bc'_A){H}_{A\times B}$ is not the zero vector and hence $(\bc_A-\bc'_A){H}_{A\times {\cA}}$ (denoted  by $(a_1,a_2,\dots,a_{n-k})$) is a nonzero vector in $\ZZ_d^{n-k}$. Let $e$ be $\gcd(a_1,a_2,\dots,n-k,d)$. Then $e<d$.  By Lemma \ref{lm:b2}, $(\bc_A-\bc'_A){H}_{A\times {\cA}}\bx=be$ has exactly $ed^{n-k-1}$ solutions in $\ZZ_d^{n-k}$ for every $b\in\ZZ_{d/e}$. Hence, by \eqref{eq:2.1}, we have
\begin{eqnarray*}\sum_{\bc_{\cA}\in\ZZ_d^{n-k}}\overline{\varphi(c_A,c_{\cA})}\varphi(c'_A,c_{\cA})&=&\zeta_d^g\sum_{\bc_{\cA}\in\ZZ_d^{n-k}}\zeta_d^{(\bc_A-\bc'_A){H}_{A\times {\cA}}\bc_{\cA}}\\
&=&ed^{n-k-1}\zeta_d^g\sum_{b=0}^{d/e-1}\zeta_d^{be}=0,\end{eqnarray*}
where $g=\bc_A\widetilde{H}_{A\times A}\bc_A^T-\bc'_A\widetilde{H}_{A\times A}(\bc'_A)^T$. This completes the proof.
}\end{proof}

If $d$ is a prime $p$, then the condition in Theorem \ref{thm:a1} can be simplified.
\begin{thm}\label{thm:1} Let $p$ be a prime.
If there is a zero diagonal symmetric matrix $H\in\MM_{n\times n}(\ZZ_p)$ such that for any subset $A$ of $\{1,2,\dots,n\}$ with $|A|=k$,  the submatrix $H_{A\times \cA}$ has rank $k$, then the  $n$-qudit state $|\Phi\rangle=\sum_{\bc\in\ZZ_p^n}\varphi(\bc)|\bc\rangle$ is $k$-uniform with $\varphi(\bc)=\zeta_d^{\bc \widetilde{H}\bc^T}$, where $\widetilde{H}=(\widetilde{h}_{ij})$ with $\widetilde{h}_{ij}=h_{ij}$ for $i<j$ and $0$ otherwise.
\end{thm}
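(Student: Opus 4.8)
The plan is to derive Theorem~\ref{thm:1} from Theorem~\ref{thm:a1}, which we have already established for an arbitrary modulus $d$. Since every hypothesis other than the submatrix condition is identical in the two statements, it suffices to check that, when the modulus is a prime $p$, the single assumption that ``$H_{A\times\cA}$ has rank $k$'' already supplies the combinatorial datum demanded by Theorem~\ref{thm:a1}, namely a size-$k$ column index set $B\subseteq\cA$ for which $H_{A\times B}$ is invertible over $\ZZ_p$. Establishing this one implication finishes the proof immediately.

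First I would fix a subset $A$ with $|A|=k$ and regard $H_{A\times\cA}$ as a $k\times(n-k)$ matrix over the field $\ZZ_p$. Here the primality of $p$ is used essentially: because $\ZZ_p$ is a field, the row rank and the column rank of $H_{A\times\cA}$ coincide and both equal $k$. Having full column rank means that among the $n-k$ columns of $H_{A\times\cA}$ there are $k$ that are linearly independent; let $B\subseteq\cA$ index such a choice, so $|B|=k$. Then $H_{A\times B}$ is a $k\times k$ matrix whose columns are linearly independent, hence it has rank $k$, i.e.\ it is nonsingular.

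Next I would translate ``nonsingular'' into ``invertible over $\ZZ_p$'' in exactly the sense used in Theorem~\ref{thm:a1}. Over a field a square matrix has full rank precisely when its determinant is nonzero, and $\det H_{A\times B}\neq 0$ in $\ZZ_p$ is the same as $\det H_{A\times B}$ being coprime with $p$, which by the invertibility criterion recalled just before Theorem~\ref{thm:a1} is equivalent to $H_{A\times B}$ being invertible over $\ZZ_p$. Thus for every $A$ the required invertible submatrix $H_{A\times B}$ exists, and invoking Theorem~\ref{thm:a1} yields that $|\Phi\rangle$ is $k$-uniform.

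There is no genuine obstacle here; the only point requiring care is conceptual rather than computational, namely that this simplification is special to prime $d$. Over a general ring $\ZZ_d$ the equivalence ``full row rank $\Leftrightarrow$ some $k\times k$ minor is a unit'' fails (a column can be divisible by a zero divisor), which is precisely why Theorem~\ref{thm:a1} must hypothesize the existence of $B$ directly. Accordingly, the entire content of Theorem~\ref{thm:1} is this field-theoretic equivalence, after which the previous theorem applies verbatim.
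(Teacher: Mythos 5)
Your proposal is correct and follows exactly the route the paper takes: the paper's proof of Theorem~\ref{thm:1} is the one-line observation that over the field $\ZZ_p$ a rank-$k$ matrix $H_{A\times\cA}$ contains an invertible $k\times k$ submatrix $H_{A\times B}$, after which Theorem~\ref{thm:a1} applies. You merely spell out the linear-algebra justification that the paper leaves implicit.
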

\begin{proof}{\rm In this case, $H_{A\times \cA}$ has an invertible submatrix $H_{A\times B}$ for some subset $B$ of $\cA$ with $|B|=k$. The desired result follows from Theorem \ref{thm:a1}.
}\end{proof}

\begin{exm}{\rm
Based on Thorem \ref{thm:a1}, we provide two examples for $1$-uniform $2$-qudit quantum states,  with level $4$ and the other with level $6$. In both cases, the matrix is given by
\begin{equation*}
H=\left(
                                                                               \begin{array}{cc}
                                                                                 0 & 1 \\
                                                                                 1 & 0 \\
                                                                               \end{array}
                                                                             \right)
\end{equation*}
The quantum state of level $4$ corresponding to this matrix is
\begin{eqnarray*}
&&|\Phi\rangle=|00\rangle +|10\rangle +|20\rangle +|30\rangle +|01\rangle +i|11\rangle -|21\rangle -i|31\rangle \\&&+|02\rangle -|12\rangle +|22\rangle -|32\rangle +|03\rangle -i|13\rangle -|23\rangle +i|33\rangle \\&&
\end{eqnarray*}
The quantum state of level $6$ corresponding to this matrix is
\begin{eqnarray*}
&&|\Phi\rangle=|00\rangle +|10\rangle +|20\rangle +|30\rangle +|40\rangle +|50\rangle +|01\rangle +\zeta_6|11\rangle \\&&+\zeta_6^2|21\rangle +\zeta_6^3|31\rangle +\zeta_6^4|41\rangle +\zeta_6^5|51\rangle +|02\rangle +\zeta_6^2|12\rangle +\zeta_6^4|22\rangle \\&& +|32\rangle+\zeta_6^2|42\rangle +\zeta_6^4|52\rangle +|03\rangle +\zeta_6^3|13\rangle +|23\rangle +\zeta_6^3|33\rangle \\&&+|43\rangle +\zeta_6^3|53\rangle +|04\rangle +\zeta_6^4|14\rangle +\zeta_6^2|24\rangle +|34\rangle +\zeta_6^4|44\rangle \\&&+\zeta_6^2|54\rangle +|05\rangle +\zeta_6^5|15\rangle +\zeta_6^4|25\rangle +\zeta_6^3|35\rangle +\zeta_6^2|45\rangle +\zeta_6|55\rangle
\end{eqnarray*}
}\end{exm}
The second construction applies Lemma \ref{lm:1} to linear codes with good minimum distance and dual distance. As our classical codes are defined over prime fields $\ZZ_p$, we consider level $p$ only for primes $p$ for the following constrution.
\begin{thm}\label{thm:2}
If $C$ is a $p$-ary linear code of length $n$. Let $d$ and $d^{\perp}$ be the minimum distance of $C$ and  its Euclidean dual $C^{\perp}$, respectively. If $\min\{d,d^{\perp}\}\ge k+1$, then $|\Phi\rangle=\frac{1}{\sqrt{|C|}}\sum_{\bc\in C}|\bc\rangle$ is $k$-uniform $n$-qudit quantum state of level $p$.
\end{thm}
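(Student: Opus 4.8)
The plan is to verify the two conditions of Lemma~\ref{lm:1} for the map $\varphi$ attached to $|\Phi\rangle$. Here $\varphi(\bc)=\frac{1}{\sqrt{|C|}}$ when $\bc\in C$ and $\varphi(\bc)=0$ otherwise, so condition~(i) is immediate because $C$ contains $\bo$. Also $\langle\Phi|\Phi\rangle=\sum_{\bc\in C}\frac{1}{|C|}=1$, so the target value in condition~(ii) is $\frac{1}{p^k}$. Since $\overline{\varphi}\,\varphi=\frac{1}{|C|}$ exactly on the support and $0$ elsewhere, for a fixed $A$ with $|A|=k$ the quantity
\[
S(c_A,c'_A)=\sum_{\bc_{\cA}\in\ZZ_p^{n-k}}\overline{\varphi(c_A,\bc_{\cA})}\varphi(c'_A,\bc_{\cA})
\]
equals $\frac{1}{|C|}$ times the number of $\bc_{\cA}\in\ZZ_p^{n-k}$ for which both $(c_A,\bc_{\cA})$ and $(c'_A,\bc_{\cA})$ lie in $C$. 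The whole argument then reduces to counting these common extensions in the two cases $c_A\neq c'_A$ and $c_A=c'_A$.

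For the off-diagonal case $c_A\neq c'_A$ I would use the minimum distance of $C$. If some $\bc_{\cA}$ made both $(c_A,\bc_{\cA})$ and $(c'_A,\bc_{\cA})$ codewords, then by linearity their difference $(c_A-c'_A,\bo)$ is a codeword supported entirely on $A$. It is nonzero because $c_A\neq c'_A$, yet its Hamming weight is at most $|A|=k<k+1\le d$, contradicting the minimum distance bound. Hence no common extension exists and $S(c_A,c'_A)=0$, as required.

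For the diagonal case $c_A=c'_A$ I would count the codewords whose projection to $A$ equals $c_A$; this is where the dual distance enters. The projection $\pi_A\colon C\to\ZZ_p^k$, $\bc\mapsto\bc_A$, is $\ZZ_p$-linear, so every fibre has the same size $|C|/|\mathrm{im}\,\pi_A|$, and it suffices to show $\pi_A$ is surjective. If it were not, some nonzero $v\in\ZZ_p^k$ would satisfy $\langle v,\bc_A\rangle=0$ for all $\bc\in C$; padding $v$ with zeros on $\cA$ produces a nonzero vector of $C^{\perp}$ of weight at most $k<k+1\le d^{\perp}$, contradicting the dual minimum distance. Therefore $\pi_A$ is onto, every value of $\ZZ_p^k$ is attained by exactly $|C|/p^k$ codewords, and $S(c_A,c_A)=\frac{1}{|C|}\cdot\frac{|C|}{p^k}=\frac{1}{p^k}=\frac{\langle\Phi|\Phi\rangle}{p^k}$. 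With both cases settled, condition~(ii) holds and Lemma~\ref{lm:1} yields that $|\Phi\rangle$ is $k$-uniform.

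The main obstacle is the diagonal case: one must recognize that $d^{\perp}\ge k+1$ is precisely the statement that $C$ is an orthogonal array of strength $k$ (equivalently, that every $k$-coordinate projection of $C$ is uniform), and establish the uniform-fibre claim through the short dual-weight argument above rather than merely asserting it. The off-diagonal case is the easier mirror image of the same idea, applied to $C$ in place of $C^{\perp}$.
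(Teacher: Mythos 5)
Your proof is correct and follows essentially the same route as the paper: verify Lemma~\ref{lm:1}, use the minimum distance of $C$ to kill the off-diagonal terms and the dual distance to get the uniform count $|C|/p^k$ in the diagonal case. The only difference is that you actually prove the uniform-fibre claim (surjectivity of the projection via a dual-weight argument), whereas the paper simply asserts it as a consequence of $d^{\perp}\ge k+1$.
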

\begin{proof} {\rm It is clear that $\langle \Phi|\Phi\rangle$ is equal to $1$. Define the map  $\Gv$ from $\ZZ_p^n$ to $\CC$ given by $\Gv(\bx)= 1/\sqrt{|C|}$ if $\bx\in C$ and $0$ otherwise. Consider a subset $A$ of $\{1,2,\cdots,n\}$ with $|A|=k$.

Since $d^{\perp}\ge k+1$, for every $\bc_A\in\ZZ_p^k$ there are exactly $|C|/p^{k}$ vectors $\bc_{\cA}\in\ZZ_p^{n-k}$ such that $(\bc_A,\bc_{\cA})\in C$. Thus,
If $\bc_A=\bc'_A$, one has
\begin{eqnarray*}&&\sum_{\bc_{\cA}\in\ZZ_p^{n-k}}\overline{\varphi(c_A,c_{\cA})}\varphi(c'_A,c_{\cA})=\sum_{(\bc_A,\bc_{\cA})\in C}\frac1{{|C|}}\\ &&=\frac{|C|/p^{k}}{|C|}= \frac{\langle \Phi|\Phi\rangle}{p^k}.\end{eqnarray*}
If $\bc_A\neq\bc'_A$, then the Hamming distance between $(\bc_A,\bc_{\cA})$ and $(\bc'_A,\bc_{\cA})$ is at most $k$. This implies that $(\bc_A,\bc_{\cA})$ and $(\bc'_A,\bc_{\cA})$ do not belong to $C$ simultaneously for any $\bc_{\cA}\in\ZZ^{n-k}_p$. In other words, $\overline{\varphi(c_A,c_{\cA})}\varphi(c'_A,c_{\cA})=0$ for any $\bc_{\cA}\in\ZZ^{n-k}_p$. In this case, we have $\sum_{\bc_{\cA}\in\ZZ_p^{n-k}}\overline{\varphi(c_A,c_{\cA})}\varphi(c'_A,c_{\cA})=0$.
The desired result follows from Lemma \ref{lm:1}.
}\end{proof}

\begin{rmk} {\rm \begin{itemize}
\item[(i)] In general, the construction in Theorems \ref{thm:a1} and  Theorems \ref{thm:1}  gives better results than the one in Theorem \ref{thm:2}. We will see this in Sections 3 and 4.
 \item[(ii)] For the construction in Theorem \ref{thm:2}, we require linear codes with both good minimum distance and dual distance. Algebraic geometry codes provide an excellent family of codes with good minimum distance and dual distance \cite{Stich}. We will illustrate examples by algebraic geometry codes later in this section and the next two sections.
\end{itemize}
}\end{rmk}
\begin{cor}\label{cor:1} If there exists a $p$-ary $[n,n/2,\ge k+1]$-self-dual code, then
\begin{itemize}
\item[{\rm (i)}] there exists a $k$-uniform $n$-qudit quantum state of level $p$;
\item[{\rm (ii)}] there exists a $(k-1)$-uniform $(n-1)$-qudit quantum state of level $p$.
\end{itemize}
\end{cor}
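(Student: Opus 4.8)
The plan is to obtain both statements directly from Theorem \ref{thm:2}, applied to $C$ itself for (i) and to a length-$(n-1)$ modification of $C$ for (ii). Throughout write $C$ for the given $p$-ary $[n,n/2,\ge k+1]$ self-dual code, so that $C=C^{\perp}$.

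Part (i) requires essentially no work: self-duality gives $d^{\perp}=d\ge k+1$, hence $\min\{d,d^{\perp}\}\ge k+1$, and Theorem \ref{thm:2} immediately shows that $|\Phi\rangle=\frac{1}{\sqrt{|C|}}\sum_{\bc\in C}|\bc\rangle$ is a $k$-uniform $n$-qudit state of level $p$.

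For part (ii) I would let $P$ be the code obtained by puncturing $C$ at a single coordinate (say the last one), and verify that $P$ satisfies the hypothesis of Theorem \ref{thm:2} with parameter $k-1$, i.e.\ that $\min\{d(P),d^{\perp}(P)\}\ge k$. First, $\dim P=n/2$: puncturing can lower the dimension only if every codeword of $C$ vanishes at that coordinate, which would place the weight-one vector $e_n$ in $C^{\perp}=C$ and contradict $d\ge k+1\ge 2$. Second, $d(P)\ge d(C)-1\ge k$, since deleting one coordinate lowers each nonzero weight by at most one. Third, and crucially, the dual distance of $P$ is large: by the standard duality between puncturing and shortening, $P^{\perp}$ equals the code obtained by \emph{shortening} $C^{\perp}=C$ at the same coordinate, and shortening retains only the codewords already vanishing there before deleting that (now zero) coordinate, so it preserves weights and gives $d^{\perp}(P)=d(P^{\perp})\ge d(C)\ge k+1$. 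Combining the three facts yields $\min\{d(P),d^{\perp}(P)\}\ge k$, so Theorem \ref{thm:2} produces a $(k-1)$-uniform $(n-1)$-qudit state of level $p$ from $P$.

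The only substantive ingredient is the puncturing/shortening duality, namely that $P^{\perp}$ is the shortening of $C^{\perp}$ at the punctured coordinate; this is precisely where the self-duality $C=C^{\perp}$ converts the dual-distance requirement on $P$ into the large ordinary minimum distance of $C$. The remaining steps are bookkeeping about how deleting a coordinate affects dimension and weight, and I anticipate no real obstacle beyond checking that the punctured coordinate is not identically zero on $C$, which the hypothesis $d\ge k+1$ guarantees.
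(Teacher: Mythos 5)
Your proof is correct and is essentially the paper's argument seen through the dual lens: the paper shortens $C$ at a coordinate and applies Theorem~\ref{thm:2} to the resulting $[n-1,n/2-1,\ge k+1]$ code, whereas you puncture $C$ and apply Theorem~\ref{thm:2} to the $[n-1,n/2,\ge k]$ code that is exactly the dual of the paper's shortened code, so the same puncture/shorten duality and the same distance bounds do all the work. One small inaccuracy in your bookkeeping: puncturing lowers the dimension precisely when $C$ contains a nonzero codeword supported only on the punctured coordinate (i.e.\ a weight-one codeword), not when every codeword of $C$ vanishes there; your conclusion $\dim P=n/2$ still stands, since $d(C)\ge k+1\ge 2$ rules out weight-one codewords.
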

\begin{proof}{\rm  Part (i) follows from Theorem \ref{thm:2} immediately.

For Part (ii), let $C$ be a $p$-ary $[n,n/2,\ge k+1]$-self-dual code. Without loss of generality, we may assume that there is a codeword $\bc$ of $C$ such that the last coordinate is not zero. Let $C_1$ consist of all codewords of $C$ whose last coordinates are zero. Then $C_1$ is $p$-ary linear code of dimension $n/2-1$, length $n$ and minimum distance at least $k+1$. Delete the last coordinate of $C_1$ to obtain a $p$-ary $[n-1,n/2-1,\ge k+1]$-linear code $C_2$. It is clear that the dual code $C_2^{\perp}$ is the code obtained from $C^{\perp}$ by deleting the last coordinate. It is clear that $C_2^{\perp}$ is a $p$-ary $[n-1,n/2,\ge k]$-linear code. Applying Theorem \ref{thm:2} to $C_2$ gives the desired result of Part (ii).
}\end{proof}


Theorem \ref{thm:2}  provides an explicit construction of $k$-uniform quantum states . We give an example below.
\begin{exm}{\rm
Consider the binary $[8,4,4]$-self-dual code C with generator matrix
\[\left(
  \begin{array}{cccccccc}
    1 & 0 & 0 & 0 & 1 & 0 & 1 & 0 \\
    0 & 1 & 0 & 0 & 0 & 1 & 1 & 0 \\
    0 & 0& 1 & 0 & 0 & 0 & 1 &1 \\
    0 & 0 & 0 & 1 & 1 & 1 & 1 & 1 \\
  \end{array}
\right)\]
Then the 8-qudit state $|v\rangle=\frac{1}{4}(|00000000\rangle+|11111111\rangle+|110000111\rangle+|01001011\rangle+|00101101\rangle+|00011110\rangle+|11001100\rangle+|10101010\rangle+|10011001\rangle+
|01100110\rangle+|01010101\rangle+|00110011\rangle+|11100010\rangle+|01111000\rangle+|11010001\rangle+|10110100\rangle)$
is 3-uniform.
}\end{exm}

\begin{rmk}{\rm \begin{itemize}
\item[(i)] Self-dual codes have been used to construct $k$-uniform quantum states in \cite{Sc,DA}. However, as we remarked,  the construction in Theorem \ref{thm:1}   gives better results than the one in Theorem \ref{thm:2}. Consequently,   the construction in Theorem \ref{thm:1}   gives better results than those from self-dual codes.
  \item[(ii)]  In fact, Theorem \ref{thm:2} does not require  codes to be self-dual. We now give an example showing that  Theorem \ref{thm:2} can give a better $k$-uniform quantum states than those from self-dual codes. We illustrate this by algebraic geometry codes in the following example.
\end{itemize}
}\end{rmk}
\begin{exm} \label{exm:1} {\rm We refer to \cite{Stich} for background on algebraic curves over finite fields and algebraic geometry codes. It is well know that an algebraic curve over the Galois field $\rGF(q)$ of $q$ elements with $n$ rational points and genus $g$ gives a $q$-ary linear code $C$ with parameters $[n,k,n-k+1-g]$ and its dual $C^{\perp}$ with parameters $[n,n-k,k+1-g]$ for any $g\le k\le n$.
\begin{itemize}
\item[(i)] By the online table \cite{Geer}, there is an algebraic curve over $\ZZ_5$ with $10$ rational points and genus $1$. Thus, one obtains a $5$-ary $[10,5,5]$ code $C$ and its dual code $C^{\perp}$ also has parameters $[10,5,5]$. By Theorem \ref{thm:2}, one obtains a $4$-uniform $10$-qudit quantum state of level $5$. On the other hand, the optimal $5$-ary self-dual code of length $10$ has minimum distance $4$ (see the online table \cite{GO}). Thus, applying Corollary \ref{cor:1} gives only  a $3$-uniform $10$-qudit quantum state of level $5$.
\item[(ii)] The above example is not a singularity. We can find other examples showing that  Theorem \ref{thm:2} can give better result than  Corollary \ref{cor:1}. Here is another example.  By the online table \cite{Geer}, there is an algebraic curve over $\ZZ_7$ with $16$ rational points and genus $2$. Thus, one obtains a $5$-ary $[16,8,7]$ code $C$ and its dual code $C^{\perp}$ also has parameters $[16,8,7]$. By Theorem \ref{thm:2}, we get a $6$-uniform $16$-qudit quantum state of level $7$. On the other hand, the optimal $7$-ary self-dual code of length $16$ has minimum distance $6$ (see the online table \cite{GO}). Thus, applying Corollary \ref{cor:1} gives only  a $5$-uniform $16$-qudit quantum state of level $7$.
    \end{itemize}
}\end{exm}

\section{The case where $n$ is small}
For given $d$ and $n$, one natural question is what is the maximal $k$ such that there exists a $k$-uniform $n$-qudit quantum state of level $d$. This question motivates the following definition.
\begin{defn}{\rm
For given positive integers $n\ge 2$ and $d\ge 2$, define $k_d(n)$ to be the largest $k$ such that there is an $n$-qudit state of level $d$ that is $k$-uniform.
}\end{defn}

One obvious upper bound on $k_d(n)$ is $n/2$. In this section, we will study some lower bounds on $k_d(n)$ by constructing $k$-uniform $n$-qudit quantum states of level $d$ via our results in Section 3. We discuss the cases for small $d$  and large $d$  separately. Although our matrix construction works well for composite levels $d$,  for simplicity  we only consider the case where $d=p$ is a prime number.

By Theorem \ref{thm:1}, in order to construct a $k$-uniform quantum state,  it is sufficient to find an $n\times n$ zero-diagonal  matrix  $H$  satisfying that $H_{A\times {\cA}}$ has rank $k$   for any subset $A$ of $\{1,2,\dots,n\}$ with $|A|=k$. Through the random matrix counting, we provide a sufficient condition for existence of such a matrix.

\begin{lemma}\label{lm:2} The number of $n\times n$ zero-diagonal  matrices  $H$ over $\ZZ_p$ satisfying that $H_{A\times {\cA}}$ has rank $k$   for any  subset $A$ of $\{1,2,\dots,n\}$ with $|A|=k$ is at least
\begin{equation}\label{eq:a2}p^{{n\choose n/2}}\left(1-{n\choose k}\left(1-\prod_{i=0}^{k-1}\left(1-\frac1{p^{n-k-i}}\right)\right)\right).\end{equation}
\end{lemma}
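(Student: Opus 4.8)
The plan is to prove the estimate by a first-moment (union bound) argument applied to a uniformly random zero-diagonal symmetric matrix $H\in\MM_{n\times n}(\ZZ_p)$. Such an $H$ is completely determined by its $\binom{n}{2}$ strictly upper-triangular entries, which may be chosen freely and independently in $\ZZ_p$; hence the total number of zero-diagonal symmetric matrices is $p^{\binom{n}{2}}$, and I view $H$ as drawn uniformly from this set. The task then reduces to bounding the probability that $H$ \emph{violates} the rank condition for at least one $k$-subset, and multiplying the resulting success probability by $p^{\binom{n}{2}}$ to count the good matrices.

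First I would fix one subset $A\subseteq\{1,\dots,n\}$ with $|A|=k$ and study the distribution of the cross block $H_{A\times\cA}$, a $k\times(n-k)$ matrix (here the relevant regime is $k\le n/2$, since otherwise a $k\times(n-k)$ matrix cannot have rank $k$ and the bound is vacuous). The crucial point is that the $k(n-k)$ positions $(i,j)$ with $i\in A$ and $j\in\cA$ correspond to pairwise distinct unordered pairs $\{i,j\}$ straddling the partition $A\mid\cA$, and each such pair indexes exactly one free upper-triangular entry of $H$; these entries are moreover disjoint from those governing the blocks $H_{A\times A}$ and $H_{\cA\times\cA}$. Consequently, as $H$ ranges uniformly over all zero-diagonal symmetric matrices, $H_{A\times\cA}$ is distributed exactly as a uniformly random $k\times(n-k)$ matrix over $\ZZ_p$. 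This decoupling---replacing a question about structured symmetric matrices by one about unconstrained rectangular matrices---is the step I expect to require the most care, since it is the sole place where the zero-diagonal and symmetry hypotheses enter and must be handled cleanly.

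Finally I would invoke the standard count of full-rank rectangular matrices: the number of $k\times(n-k)$ matrices over $\ZZ_p$ with linearly independent rows is $\prod_{i=0}^{k-1}\bigl(p^{\,n-k}-p^{i}\bigr)$, so the probability that $H_{A\times\cA}$ has rank $k$ equals $\prod_{i=0}^{k-1}\bigl(1-p^{-(n-k-i)}\bigr)$ and the probability that it is rank-deficient is $1-\prod_{i=0}^{k-1}\bigl(1-p^{-(n-k-i)}\bigr)$. Applying the union bound over the $\binom{n}{k}$ subsets $A$ of size $k$ bounds the probability of some failure by $\binom{n}{k}\bigl(1-\prod_{i=0}^{k-1}(1-p^{-(n-k-i)})\bigr)$, so a fraction at least $1-\binom{n}{k}\bigl(1-\prod_{i=0}^{k-1}(1-p^{-(n-k-i)})\bigr)$ of all matrices satisfies the rank condition for every $A$ simultaneously; multiplying this fraction by $p^{\binom{n}{2}}$ yields the claimed lower bound.
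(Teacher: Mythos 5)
Your proposal is correct and follows essentially the same route as the paper: both arguments reduce to the observation that the $k(n-k)$ entries of the cross block $H_{A\times\cA}$ are free and independent (so that block is a uniformly random $k\times(n-k)$ matrix), invoke the standard count $\prod_{i=0}^{k-1}\bigl(p^{\,n-k}-p^{i}\bigr)$ of full-rank rectangular matrices, and finish with a union bound over the $\binom{n}{k}$ subsets. Your probabilistic phrasing is just a normalized version of the paper's direct count; you also correctly identify the total number of zero-diagonal symmetric matrices as $p^{\binom{n}{2}}$, whereas the paper's exponent $\binom{n}{n/2}$ appears to be a typo.
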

\begin{proof}{\rm
Consider the set ${\mathcal S}$ of $n\times n$ zero diagonal symmetric matrices over $\ZZ_p$. Then the cardinality of ${\mathcal S}$ is $p^{{n\choose k}}$. For a given  subset $A$ of $\{1,2,\dots,n\}$ with $|A|=k$, the set
$\{H\in {\mathcal S}:\; H_{A\times {\cA}}\ \mbox{ is invertible}\}$ has size $\prod_{i=0}^{k-1}(p^{n-k}-p^i) \times p^{{n\choose n/2}-k(n-k)}$. This implies that the set
 $\{H\in {\mathcal S}:\; H_{A\times {\cA}}\ \mbox{ is not invertible}\}$  has size $p^{{n\choose n/2}}- p^{{n\choose n/2}-k(n-k)}\times \prod_{i=0}^{k-1}(p^{n-k}-p^i) $. By the union bound, the number of zero diagonal symmetric matrices $H$  satisfying that,   for any  subset $A$ of  $\{1,2,\dots,n\}$ with $|A|=k$,  $H_{A\times {\cA}}$ \mbox{ is invertible} is at least $p^{{n\choose n/2}}-{n\choose k}\left( p^{{n\choose n/2}}- p^{{n\choose n/2}-k(n-k)}\times\prod_{i=0}^{k-1}(p^{n-k}-p^i) \right)$.
The desired result follows.
}\end{proof}

\begin{cor}\label{cor:3} If the triple $(n,k,p)$ satisfies ${n\choose k}(p^k-1)\le (p-1)p^{n-k}$, then there exists a $k$-uniform $n$-qudit quantum state of level $p$.
\end{cor}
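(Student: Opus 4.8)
The plan is to combine the counting bound of Lemma \ref{lm:2} with the matrix criterion of Theorem \ref{thm:1}. Recall that Theorem \ref{thm:1} reduces the existence of a $k$-uniform $n$-qudit state of level $p$ to the existence of a single zero-diagonal symmetric matrix $H\in\MM_{n\times n}(\ZZ_p)$ for which $H_{A\times\cA}$ has rank $k$ for every $k$-subset $A$ of $\{1,\dots,n\}$. Lemma \ref{lm:2} already bounds the number of such matrices from below by a positive prefactor $p^{{n\choose 2}}$ times $(1-\Delta)$, where $\Delta={n\choose k}\bigl(1-\prod_{i=0}^{k-1}(1-p^{-(n-k-i)})\bigr)$. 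Hence it suffices to show that the hypothesis ${n\choose k}(p^k-1)\le(p-1)p^{n-k}$ forces $\Delta<1$, so that at least one admissible $H$ exists.

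First I would control the inner factor $1-\prod_{i=0}^{k-1}(1-x_i)$ with $x_i=p^{-(n-k-i)}\in(0,1)$. The elementary inequality $\prod_{i}(1-x_i)\ge 1-\sum_i x_i$ gives $1-\prod_{i=0}^{k-1}(1-x_i)\le\sum_{i=0}^{k-1}x_i$. The right-hand side is a geometric sum, $\sum_{i=0}^{k-1}p^{-(n-k-i)}=p^{-(n-k)}\sum_{i=0}^{k-1}p^{i}=\frac{p^k-1}{(p-1)p^{n-k}}$. Substituting this into $\Delta$ yields $\Delta\le{n\choose k}\cdot\frac{p^k-1}{(p-1)p^{n-k}}$, and the assumed inequality ${n\choose k}(p^k-1)\le(p-1)p^{n-k}$ is precisely the statement that this upper bound is at most $1$.

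To finish I would argue strictness, since the chain of estimates so far only delivers $\Delta\le 1$. For $k\ge 2$ at least two of the $x_i$ are nonzero, so $\prod_{i}(1-x_i)>1-\sum_i x_i$ holds strictly; consequently $\Delta<{n\choose k}\sum_i x_i\le 1$, the count in Lemma \ref{lm:2} is strictly positive, and a matrix $H$ with the required rank property exists. The case $k=1$ collapses to the single condition $n<p^{n-1}$, which I would verify directly. In either situation Theorem \ref{thm:1} converts such an $H$ into the desired $k$-uniform state.

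The main obstacle is exactly this boundary behaviour: the union-bound and geometric-series estimates are tight enough that the hypothesis only guarantees $\Delta\le 1$, so one must exploit the strictness of $\prod_{i}(1-x_i)\ge 1-\sum_i x_i$ (valid as soon as two factors are nontrivial) to upgrade this to $\Delta<1$ and thereby ensure a nonempty set of good matrices. This is a mild point once the geometric-series bound is in place, and it is the only place where care is needed.
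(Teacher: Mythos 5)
Your argument is essentially identical to the paper's: invoke Lemma \ref{lm:2}, bound $1-\prod_{i=0}^{k-1}\bigl(1-p^{-(n-k-i)}\bigr)$ above by the geometric sum $\frac{p^k-1}{(p-1)p^{n-k}}$, use the hypothesis to make the matrix count positive, and feed the resulting $H$ into Theorem \ref{thm:1}. Your added attention to strictness at the boundary (the hypothesis holds with equality and $1-\prod_i(1-x_i)\le\sum_i x_i$ is an equality when $k=1$) addresses a point the paper silently elides by asserting a strict inequality in its first estimate, so your version is, if anything, slightly more careful.
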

\begin{proof}{\rm Denote by $N_p(n,k)$ the number in  \eqref{eq:a2}.
 By Theorem \ref{thm:1} and Lemma \ref{lm:2}, it is sufficient to show that $N_p(n,k)>0 $ under the condition of this Corollary. Indeed
\begin{eqnarray*}N_p(n,k)&>& p^{{n\choose n/2}}\left(1-{n\choose k}\sum_{i=0}^{k-1}\frac1{p^{n-k-i}}\right)\\&=& p^{{n\choose n/2}}\left(1-{n\choose k}\times \frac1{p^{n-k}}\times\frac{p^k-1}{p-1}\right)\ge 0.\end{eqnarray*}
This completes the proof.
}\end{proof}

\subsection{Small $k$}

In \cite{Sc}, it was proved that for any $n\ge5$, there exists a $2$-uniform $n$-qudit quantum state of level $2$. By using the above corollary, we can extend this results largely. For instance, we have the following result.

\begin{thm} For any prime $p$, there exists an integer $M_k$ such that for any $n\ge M_k$, one can construct a $k$-uniform $n$-qudit of level $p$. Furthermore, one has the following quantum states.
\begin{itemize}
\item[{(i)}] For any $n\ge 8$ and integer $k\ge 1$, there exists a  $3$-uniform $n$-qudit quantum state of level $p$.
\item[{(ii)}] For any $n\ge 12$, there exists a  $4$-uniform $n$-qudit quantum state of level $p$.
\item[{(iii)}] For any $n\ge 18$, there exists a  $5$-uniform $n$-qudit quantum state of level $p$.
\end{itemize}
\end{thm}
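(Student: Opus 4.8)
The plan is to get the opening sentence immediately from the counting bound, and then to prove the three sharp thresholds $M_3=8$, $M_4=12$, $M_5=18$ by splitting into a ``large prime'' regime, where that same counting bound already delivers the claim, and a finite ``small prime'' regime, where the bound provably fails and explicit constructions are unavoidable. For the general sentence, fix $k$ and the prime $p$: by Corollary \ref{cor:3} it suffices to produce $M_k$ with $\binom{n}{k}(p^k-1)\le (p-1)p^{n-k}$, i.e. $\binom{n}{k}\frac{p^k-1}{p-1}\le p^{n-k}$, for all $n\ge M_k$. The left side is a degree-$k$ polynomial in $n$ while the right side is exponential in $n$, so such an $M_k$ exists (it may depend on $p$, which matches the quantifier order as written). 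Parts (i)--(iii) then sharpen this to thresholds that are uniform in $p$, which is where the real content lies.

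For the large-prime regime I would track the ratio $R(n,p)=\binom{n}{k}\frac{p^k-1}{(p-1)p^{n-k}}$ that Corollary \ref{cor:3} needs to be $\le 1$, using two monotonicities. First, $R(n{+}1,p)/R(n,p)=\frac{n+1}{n+1-k}\cdot\frac1p<1$ once $n>\frac{(k-1)p+1}{p-1}$, so $R$ decreases in $n$ past a point well below each stated threshold. Second, $\frac{p^k-1}{(p-1)p^{n-k}}=\sum_{j=0}^{k-1}p^{\,j-(n-k)}$ has all exponents negative as soon as $n\ge 2k$, so $R$ also decreases in $p$ there. Hence it is enough to verify the threshold at the smallest relevant prime: one checks $R(8,5)<1$ for $k=3$, $R(12,5)<1$ for $k=4$, and $R(18,3)<1$ for $k=5$, and the two monotonicities then yield the matrix construction of Theorem \ref{thm:1} for all $p\ge 5,\ n\ge 8$ (part i), all $p\ge 5,\ n\ge 12$ (part ii), and all $p\ge 3,\ n\ge 18$ (part iii).

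What survives is a genuinely finite list: for (i), $p=2$ with $8\le n\le 14$ and $p=3$ with $n\in\{8,9\}$; for (ii), $p=2$ with $12\le n\le 20$ and $p=3$ with $n\in\{12,13\}$; for (iii), $p=2$ with $18\le n\le 25$. These are exactly the $n$ lying below the point where the counting inequality first holds, and for each I would exhibit an explicit zero-diagonal symmetric $H$ over $\ZZ_p$ all of whose $k\times(n-k)$ blocks $H_{A\times\cA}$ have rank $k$, invoking Theorem \ref{thm:1}; each is certified by a finite rank check over the $\binom{n}{k}$ choices of $A$, and they can be collected in the tables of Section 4. A few endpoints also follow from codes via Theorem \ref{thm:2} (e.g. the binary $[8,4,4]$ self-dual code settles $p=2,\ n=8,\ k=3$). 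It is important that the matrix construction, not the code construction, carries the load here: for instance no binary length-$12$ code can have both $d\ge 5$ and $d^\perp\ge 5$, since a Griesmer count rules out $[12,6,5]$ (so $d\ge5\Rightarrow\dim C\le5$) and $[12,7,5]$ (so $d^\perp\ge5\Rightarrow\dim C\ge7$) simultaneously, whence Theorem \ref{thm:2} cannot yield a $4$-uniform $12$-qubit state while an explicit $H$ does.

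The main obstacle is precisely this small-prime boundary. There the probabilistic count behind Lemma \ref{lm:2} fails beyond repair: already for $p=2,\ k=3,\ n=8$ a single subset $A$ has block-rank-deficiency probability $1-\prod_{i=0}^{2}(1-2^{-(5-i)})\approx 0.205$, and multiplying by $\binom{8}{3}=56$ swamps the union bound; since the per-subset probability is a constant of order one, no sharpening of the union bound (nor a local-lemma refinement, given how heavily the relevant entry sets overlap at such small $n$) can rescue these cases. Consequently the thresholds $8,12,18$ can only be pinned down uniformly in $p$ by actually constructing the finitely many boundary matrices and verifying the rank condition for every $k$-subset, and organizing and checking this finite list is the crux of the argument.
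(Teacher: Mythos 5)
Your proof follows essentially the same route as the paper's: the random-matrix count of Lemma \ref{lm:2}/Corollary \ref{cor:3}, made uniform over primes via monotonicity in $p$, handles all sufficiently large $n$ and all but finitely many pairs $(p,n)$ at the stated thresholds, and the surviving small-prime cases are delegated to explicit zero-diagonal symmetric matrices verified by a finite rank check --- which is exactly how the paper's Table I was produced. Your execution of the counting half is actually more careful, and more accurate, than the paper's: the paper asserts as a ``simple calculation'' that $N_2(n,3)>0$ for all $n\ge 12$, but this is false, since at $p=2$, $k=3$, $n=12$ the bracket in \eqref{eq:a2} equals $1-220\bigl(1-\tfrac{16548735}{16777216}\bigr)\approx 1-3.00<0$, and positivity first holds at $n=15$; the paper's thresholds are rescued only by Table I, i.e.\ by precisely the split your ratio $R(n,p)$ and its two monotonicities identify ($p=2$ fails for $n\le 14$ when $k=3$, $n\le 20$ when $k=4$, $n\le 25$ when $k=5$; $p=3$ fails for $n\le 9$ when $k=3$, $n\le 13$ when $k=4$). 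Your side remarks (the Griesmer argument that Theorem \ref{thm:2} cannot give a $4$-uniform $12$-qubit state, and the irreparability of the union bound at $(p,k,n)=(2,3,8)$) are also correct.

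The one genuine gap is the single case $(p,k,n)=(2,5,25)$ in part (iii). Your surviving list correctly contains it --- for $p=2$, $k=5$ the counting bound only kicks in at $n\ge 26$, and even the sharper criterion of positivity of \eqref{eq:a2} fails at $n=25$ --- but your plan to discharge it by matrices ``collected in the tables of Section 4'' cannot be carried out: Table I stops at $n=24$ (the paper states its computations were limited to $n\le 24$), no result in the paper converts the entry $k_2(24)=6$ into a $5$-uniform $25$-qubit state, and you give no independent existence argument, so for this case existence is asserted rather than proved. The same hole sits in the paper's own proof, which dismisses parts (ii) and (iii) with ``the similar arguments apply''; still, since your write-up promises explicit verification of every boundary case, you would need to actually produce a $25\times 25$ matrix over $\ZZ_2$ whose $\binom{25}{5}=53130$ blocks $H_{A\times\cA}$ all have rank $5$, or find another construction. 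Every other boundary case in your lists is certified by Table I ($k_2(n)\ge 3$ for $8\le n\le 14$, $k_3(8)=k_3(9)=3$, $k_2(n)\ge 4$ for $12\le n\le 20$, $k_3(12)=k_3(13)=4$, $k_2(n)\ge 5$ for $18\le n\le 24$), so apart from $n=25$ your argument is complete and coincides with the paper's.
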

\begin{proof}{\rm Recall that $N_p(n,k)$ denotes the number in  \eqref{eq:a2}. We also note that for fixed $n$ and $k$, $N_p(n,k)$ monotonically decreases when $p$ increases. By Corollary \ref{cor:3}, for a fixed $k$,  $N_2(n,k)>0$ for all sufficiently large $n$, i.e., there exists an integer $M_k$ such that $N_2(n,k)>0$ for  any $n\ge M_k$. Hence, $N_p(n,k)>0$ for  any $p\ge 2$ and $n\ge M_k$. This completes the proof for the first part.

A simple calculation shows that $N_2(n,3)>0$ for all $n\ge 12$, $N_p(n,3)>0$ for all $n\ge 8$. By computer search, we find that $k_n(2)\ge 3$ for $8\le n\le 11$ (see Table I below). This completes the proof of Part (i).

The similar arguments apply to the proof of Parts (ii) and (iii).
}\end{proof}
\subsection{$p=2,3,5,7$}

Through computer search, we are able to find some lower bounds on $k_d(n)$. Due to our computation limitation, $n$ is limited to $24$ or smaller. Table I provides  lower bounds on $k_p(n)$ via construction of Theorem \ref{thm:1}. The entries with ``-" in Table I means that our computation limit does not allow us to find a reasonable lower bound on the corresponding $k_p(n)$.

\begin{center}
Table I\\ \medskip

\begin{tabular}{||c|c|c|c|c|c|c|c|c|c|c|c|c|c|c|c|c||} \hline
$n$ &$2$ &$3$ & $4$ & $5$ & $6$ & $7$ &$8$ &$9$ & $10$ & $11$ & $12$ & $13$  &$14$ &$15$ & $16$ & $17$  \\ \hline
$k_2(n)$ &$1$ &$1$ & $1$ & $2$ & $3$ & $2$ &$3$ &$3$ & $3$ & $3$ & $4$ & $4$  &$4$ &$4$ & $4$ & $4$  \\ \hline
$k_3(n)$ &$1$ &$1$ & $2$ & $2$ & $3$ & $3$ &$3$ &$3$ & $4$ & $4$ & $4$ & $4$  &$5$ &$5$ & $5$ & $5$ \\ \hline
$k_5(n)$ &$1$ &$1$ & $2$ & $2$ & $3$ & $3$ &$3$ &$4$ & $4$ & $4$ & $5$ & $5$  &$5$ &$5$ & $6$ & $6$  \\ \hline
$k_7(n)$ &$1$ &$1$ & $2$ & $2$ & $3$ & $3$ &$3$ &$4$ & $4$ & $4$ & $5$ & $5$  &$5$ &$6$ & $6$ & $7$  \\ \hline\hline
$n$& $18$ & $19$   &$10$ &$21$ & $22$ & $23$ & $24$&&&&&&&&& \\ \hline
$k_2(n)$ & $5$& $5$   &$5$ &$5$ & $5$ & $5$ & $6$ &&&&&&&&& \\ \hline
$k_3(n)$  & $5$ & $5$   &- &- & - & - & -  &&&&&&&&&\\ \hline
$k_5(n)$ & $6$ & - &- &- & - & - & - &&&&&&&&& \\ \hline
$k_7(n)$& $7$  & - &- &- & - & - & - &&&&&&&&&  \\ \hline
\end{tabular}
\end{center}

Table I provides lower bounds $k_p(n)$ only for prime levels $p$. As our Theorem \ref{thm:a1} works well for composite levels $d$, we give another table showing  lower bounds $k_d(n)$ for $d=4,6,8$ and $9$. Due to our computation limitation, we compute $k_d(n)$ only up to $n=10$.

\begin{center}
Table II\\ \medskip
\begin{tabular}{||c|c|c|c|c|c|c|c|c|c||}
  \hline
  $n$ & $2$ & $3$ & $4$ & $5$ & $6$ & $7$ & $8$ & $9$ & $10$ \\\hline
  $k_4(n)$ & $1$ & $1$ & $1$ & $2$ & $3$ & $2$ & $3$ & $3$ & $3$ \\\hline
  $k_6(n)$ & $1$ & $1$ & $1$ & $2$ & $3$ & $2$ & $3$ & $3$ & $3$ \\\hline
  $k_8(n)$ & $1$ & $1$ & $1$ & $2$ & $3$ & $2$ & $3$ & $3$ & $3$ \\\hline
  $k_9(n)$ & $1$ & $1$ & $2$ & $2$ & $3$ & $3$ & $3$ & $3$ & $4$ \\\hline
\end{tabular}
\end{center}

In addition, we provide one matrix that gives a 3-uniform 6-qudit quantum states of level 2. As our construction in Theorem \ref{thm:1} is explicit, the quantum state can be explicitly written down as long as the corresponding matrix is known.

\[
\left(
  \begin{array}{cccccc}
    0 & 1 & 1 & 1 & 0 & 0 \\
    1 & 0 & 0 & 1 & 0 & 1 \\
    1 & 0 & 0 & 1 & 1 & 0 \\
    1 & 1 & 1 & 0 & 1 & 1 \\
    0 & 0 & 1 & 1 & 0 & 1 \\
    0 & 1 & 0 & 1 & 1 & 0 \\
  \end{array}
\right)
\]

The corresponding quantum state for the above matrix is
\begin{eqnarray*}
&&|\Phi\rangle=|000000\rangle +|100000\rangle +|010000\rangle -|110000\rangle  +|001000\rangle \\&&
-|101000\rangle +|011000\rangle +|111000\rangle +|000100\rangle -|100100\rangle \\&&-|010100\rangle
-|110100\rangle -|001100\rangle -|101100\rangle +|011100\rangle \\&&-|111100\rangle+|000010\rangle
 +|100010\rangle +|010010\rangle-|110010\rangle \\&& -|001010\rangle +|101010\rangle -|011010\rangle
 -|111010\rangle-|000110\rangle   \\&&+|100110\rangle+|010110\rangle +|110110\rangle -|001110\rangle
 -|101110\rangle   \\&&+|011110\rangle -|111110\rangle+|000001\rangle+|100001\rangle -|010001\rangle  \\&&
 +|110001\rangle +|001001\rangle -|101001\rangle -|011001\rangle -|111001\rangle \\&&-|000101\rangle
 +|100101\rangle -|010101\rangle -|110101\rangle +|001101\rangle \\&& +|101101\rangle +|011101\rangle
-|111101\rangle-|000011\rangle -|100011\rangle   \\&&+|010011\rangle -|110011\rangle +|001011\rangle -|101011\rangle
-|011011\rangle\\&& -|111011\rangle-|000111\rangle +|100111\rangle -|010111\rangle -|110111\rangle   \\&&-|001111\rangle
-|101111\rangle -|011111\rangle +|111111\rangle \\&&
\end{eqnarray*}

The following Table III shows  lower bounds on  $k_d(n)$ via our Corollary \ref{cor:1} and Theorem \ref{thm:2}. Some of entries in Table III are obtained via Corollary \ref{cor:1} from those best-known self-dual codes in the online table \cite{GO}, while some others are obtained from Theorem \ref{thm:2} through algebraic geometry codes and computer search. In particular, all entries for $p=7$ are obtained from algebraic geometry codes. Note that in Table III some entries on $k_p(n)$ for odd $n$ are computed from Corollary \ref{cor:1}(ii).

\begin{center}
Table III\\ \medskip

\begin{tabular}{||c|c|c|c|c|c|c|c|c|c|c|c|c|c|c|c|c|c||} \hline
$n$     &$2$ &$3$ & $4$ & $5$ & $6$ & $7$ &$8$ &$9$ & $10$ & $11$ & $12$ & $13$  &$14$ &$15$ & $16$ & $17$  \\ \hline
$k_2(n)$ &$1$ &$1$ & $1$ & $1$ & $2$ & $2$ &$3$ &$2$ & $2$ & $2$ & $3$ & $2$  &$3$ &$3$ & $3$ & $3$ \\ \hline
$k_3(n)$ &$1$ &$1$ & $2$ & $2$ & $2$ & $2$ &$2$ &$2$ & $3$ & $4$ & $5$ & $3$  &$3$ &$4$ & $5$ & $4$  \\ \hline
$k_5(n)$ &$1$ &$1$ & $2$ & $2$ & $3$ & $2$ &$3$ &$3$ & $4$ & $4$ & $5$ & $4$  &$5$ &$5$ & $6$ & 5 \\ \hline
$k_7(n)$ &$1$ &$1$ & $2$ & $2$ & $3$ & $3$ &$4$ &$3$ & $4$ & $4$ & $5$ & $5$  &$5$ &$5$ & $6$ & $5$   \\ \hline\hline

$n$  & $18$ & $19$   &$20$ &$21$ & $22$ & $23$ & $24$&&&&&&&&&  \\ \hline
$k_2(n)$& $3$ & $3$   &$3$ &$4$ & $5$ & $6$ & $7$ &&&&&&&&& \\ \hline
$k_3(n)$& $4$ & $4$   &$5$ &$5$ & $6$ & $7$ & $8$ &&&&&&& && \\ \hline
$k_5(n)$& $6$   & $6$ &$7$ &$6$ & $7$ & $7$ & $8$ &&&&&&&&&  \\ \hline
$k_7(n)$& $6$ & $7$ &$8$ &$6$ & $7$ & $7$ & $8$ &&&&&&&&&  \\ \hline
\end{tabular}
\end{center}

\begin{rmk}\label{rmk:3} {\rm Comparing Tables I with III, we find that Theorem \ref{thm:1} usually gives the same or better results than Theorem \ref{thm:2}. The only exceptional cases are $k_2(24)$ and $k_7(8)$. This is due to the extreme example of binary Golay $[24,12,8]$-self-dual code and MDS code $\ZZ_7$.
}\end{rmk}

\subsection{Large level $p$}
The main purpose of this subsection is to prove that, for any given even $n$, we have $k_p(n)=n/2$ for sufficiently large prime $p$.
\begin{thm}\label{thm:3}
For any given even integer $n\ge2$, if an odd prime $p$ satisfies $p\ge {n\choose n/2}+1$, then $k_p(n)=n/2$.
\end{thm}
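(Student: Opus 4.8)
The plan is to prove the two inequalities $k_p(n)\le n/2$ and $k_p(n)\ge n/2$ separately. The upper bound $k_p(n)\le n/2$ is the ``obvious'' fact recorded just before the definition of $k_d(n)$, and I would justify it by the standard pure-state spectral argument: for a pure state the reductions $\rho_A$ and $\rho_{\cA}$ share the same nonzero spectrum, so if every $\rho_A$ with $|A|=k$ is maximally mixed (hence of rank $p^k$) while $|\cA|=n-k<k$ forces $\rho_{\cA}$ to have rank at most $p^{n-k}<p^k$, we reach a contradiction; thus $k\le n/2$. All the real content therefore lies in the lower bound, namely exhibiting an $(n/2)$-uniform $n$-qudit state of level $p$.

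For the lower bound I would invoke the matrix criterion of Theorem \ref{thm:1} with $k=n/2$. The point worth flagging here is that when $|A|=n/2$ the complement $\cA$ also has size $n/2$, so $H_{A\times\cA}$ is a square $(n/2)\times(n/2)$ matrix and the requirement ``rank $k$'' collapses to ``$H_{A\times\cA}$ invertible.'' Thus it suffices to produce a single zero-diagonal symmetric $H\in\MM_{n\times n}(\ZZ_p)$ all of whose $\binom{n}{n/2}$ off-diagonal blocks $H_{A\times\cA}$ (over the $\binom{n}{n/2}$ choices of $A$ with $|A|=n/2$) are nonsingular; Theorem \ref{thm:1} then upgrades such an $H$ to the desired state.

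Existence of such an $H$ is precisely what the counting of Corollary \ref{cor:3} delivers, so I would simply verify that the hypothesis $p\ge\binom{n}{n/2}+1$ implies the numeric condition of that corollary for $k=n/2$. Since then $n-k=n/2$, that condition reads $\binom{n}{n/2}(p^{n/2}-1)\le(p-1)p^{n/2}$, and from $p-1\ge\binom{n}{n/2}$ one obtains $\binom{n}{n/2}(p^{n/2}-1)\le(p-1)(p^{n/2}-1)<(p-1)p^{n/2}$, as required. Hence $N_p(n,n/2)>0$, a good matrix $H$ exists, and $k_p(n)\ge n/2$, which together with the upper bound finishes the proof. The only genuinely delicate point — more a thing to double-check than a real obstacle — is the counting that underlies Corollary \ref{cor:3}: one must know that for a uniformly random zero-diagonal symmetric $H$ the block $H_{A\times\cA}$ is a uniformly random square matrix. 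This holds because $A$ and $\cA$ are disjoint, so the entries $h_{ij}$ with $i\in A,\ j\in\cA$ are distinct off-diagonal symbols left unconstrained by the symmetry $h_{ij}=h_{ji}$, whose partners $h_{ji}$ lie in the disjoint block $H_{\cA\times A}$. An explicit construction of a single good $H$ would be an attractive alternative, but arranging simultaneous nonsingularity of all $\binom{n}{n/2}$ blocks by hand is awkward, which is why the union-bound existence argument is the cleaner route.
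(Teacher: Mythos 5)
Your proposal is correct and follows essentially the same route as the paper: the paper likewise reduces the theorem to showing $N_p(n,n/2)>0$ via Theorem \ref{thm:1} and the counting of Lemma \ref{lm:2}, bounding $1-\prod_{i=1}^{n/2}(1-p^{-i})$ by $\sum_{i\ge 1}p^{-i}<1/(p-1)$, which is exactly the estimate packaged in Corollary \ref{cor:3} that you invoke. Your additional justification of the trivial upper bound $k_p(n)\le n/2$ (which the paper only asserts as obvious) and your check that the block $H_{A\times\cA}$ is uniformly distributed are both correct and harmless refinements.
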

\begin{proof} {\rm By Theorem \ref{thm:1} and Lemma \ref{lm:2}, it is sufficient to show that $N_p(n,n/2)>0 $ under the condition of our theorem. Indeed
  \begin{eqnarray*}N_p(n,n/2)&=&p^{{n\choose n/2}}\left(1-{n\choose n/2}\left(1-\prod_{i=1}^{n/2}\left(1-\frac1{p^i}\right)\right)\right)\\&\ge& p^{{n\choose n/2}}\left(1-{n\choose n/2}\sum_{i=1}^{n/2}\frac1{p^i}\right)\\&>&  p^{{n\choose n/2}}\left(1-\frac{{n\choose n/2}}{p-1}\right). \end{eqnarray*}
 If $p\ge {n\choose n/2}+1$, then $1-\frac{{n\choose n/2}}{p-1}\ge 0$ and hence $N_p(n,n/2)>0$. The desired result follows.
}\end{proof}

\section{The case where $n$ is large}

For a fixed $d\ge 2$, to see how $k_d(n)$ varies as $n$ tends to infinity, we define the following asymptotic quantity.

\begin{defn} {\rm
For a given integer $d\ge 2$, define the asymptotic quantity \[\lambda_d=\limsup_{n\rightarrow\infty}\frac{k_d(n)}{n}\].
}\end{defn}
Obviously, $\lambda_d\leq 1/2$.
 Again, we will study some lower bounds on $\lambda_d$ by constructing $k$-uniform $n$-qudit quantum states of level $d$ via our results in Section 3.  In addition, we give existence bounds and constructive bounds on $\lambda_d$ separately. As one can expect, constructive bounds are usually worse than existence bounds.

\subsection{Existence bounds on $\Gl_p$}
We first provide an existence bound via Lemma  \ref{lm:2}. For any integer $d\ge 2$, the $d$-ary entropy function is defined by
\[H_d(x):=x\log_d(d-1)-x\log_dx-(1-x)\log_d(1-x).\]
By Stirling's formula, we have
\[\lim_{n\rightarrow\infty}\frac{\log_2{n\choose k}}{n}=H_2(\Gl)\qquad \mbox{if $\frac{k}{n}\rightarrow\Gl$}.\]
\begin{thm}\label{thm:4}
Let $\Gl$ be a root of the equation $H_2(x)=(1-2x)\log_2p$. Then $\Gl_p\ge \Gl$.
\end{thm}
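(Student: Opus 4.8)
The plan is to read off the asymptotic rate directly from the sufficient condition in Corollary \ref{cor:3}, namely that a $k$-uniform $n$-qudit state of level $p$ exists whenever $\binom{n}{k}(p^k-1)\le (p-1)p^{n-k}$. Since $k_p(n)$ is by definition the largest such $k$, it suffices to produce, for all sufficiently large $n$, a value of $k$ close to $\lambda n$ satisfying this inequality, and then let $n\to\infty$ in the quantity $k_p(n)/n$.

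First I would record the shape of the function $g(x)=H_2(x)-(1-2x)\log_2 p$ on $[0,1/2]$. Both $H_2(x)$, increasing from $0$ to $1$, and $-(1-2x)\log_2 p$, increasing from $-\log_2 p$ to $0$, are increasing, so $g$ is strictly increasing with $g(0)=-\log_2 p<0$ and $g(1/2)=1>0$. Hence there is a unique root $\lambda\in(0,1/2)$, and $g(x)<0$, i.e.\ $H_2(x)<(1-2x)\log_2 p$, for every $x<\lambda$.

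Next, fix any $\lambda'<\lambda$ and set $k=\lfloor \lambda' n\rfloor$, so that $k/n\to\lambda'$. Taking base-$2$ logarithms, the condition of Corollary \ref{cor:3} is implied (using $p^k-1<p^k$) by
\[\frac1n\log_2\binom{n}{k}+\frac kn\log_2 p\le \frac{\log_2(p-1)}{n}+\frac{n-k}{n}\log_2 p.\]
By Stirling's formula, in the form of the displayed limit preceding the theorem, $\tfrac1n\log_2\binom{n}{k}\to H_2(\lambda')$, while the $\tfrac1n\log_2(p-1)$ term vanishes; thus the left side tends to $H_2(\lambda')+\lambda'\log_2 p$ and the right side to $(1-\lambda')\log_2 p$. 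The strict inequality $H_2(\lambda')<(1-2\lambda')\log_2 p$ from the previous step is exactly $H_2(\lambda')+\lambda'\log_2 p<(1-\lambda')\log_2 p$; being strict, it guarantees the displayed inequality for all sufficiently large $n$. Hence Corollary \ref{cor:3} applies and $k_p(n)\ge \lfloor\lambda' n\rfloor$, so $k_p(n)/n\ge \lfloor\lambda' n\rfloor/n\to\lambda'$ and therefore $\lambda_p=\limsup_{n\to\infty}k_p(n)/n\ge\lambda'$. Letting $\lambda'\uparrow\lambda$ yields $\lambda_p\ge\lambda$.

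The only point requiring care is the final estimate: one must verify that the strict gap between $H_2(\lambda')+\lambda'\log_2 p$ and $(1-\lambda')\log_2 p$ genuinely absorbs both the lower-order $\tfrac1n\log_2(p-1)$ term and the error in the Stirling approximation of $\tfrac1n\log_2\binom{n}{k}$. Since both of these are $o(1)$ while, for each fixed $\lambda'<\lambda$, the gap is a fixed positive constant, this is immediate, and I expect no essential obstacle beyond this bookkeeping.
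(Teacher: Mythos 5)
Your proof is correct and follows essentially the same route as the paper: both arguments rest on the random-matrix counting bound (your Corollary~\ref{cor:3} is just the packaged form of the paper's direct estimate $p\ge\binom{n}{k}p^{-n+2k}+1$ from Lemma~\ref{lm:2}), choose $k=\lfloor\lambda' n\rfloor$ for $\lambda'<\lambda$, verify the condition asymptotically via the entropy limit of $\tfrac1n\log_2\binom{n}{k}$, and let $\lambda'\uparrow\lambda$. Your added observations — monotonicity of $H_2(x)-(1-2x)\log_2 p$ on $[0,1/2]$, uniqueness of the root, and the explicit remark that the fixed positive gap absorbs the $o(1)$ errors — are correct and tidy up details the paper leaves implicit.
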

\begin{proof}{\rm Choose a very small $\Ge\in(0,\Gl)$. Put $k=\lfloor (\lambda-\Ge) n\rfloor$.  By Theorem \ref{thm:1} and Lemma \ref{lm:2}, if we can show  that $N_p(n,k)>0 $, there exists a $k$-uniform $n$-qudit quantum state of level $d$.
Note that
 \begin{eqnarray*}
 N_p(n,k)&=&p^{{n\choose n/2}}\left(1-{n\choose k}\left(1-\prod_{i=1}^k\left(1-\frac1{p^{n-k-i}}\right)\right)\right)\\
 &\ge& p^{{n\choose n/2}}\left(1-{n\choose k}p^{-n+2k}\sum_{i=1}^k\frac1{p^i}\right)\\ &>&  p^{{n\choose n/2}}\left(1-\frac{{n\choose k}p^{-n+2k}}{p-1}\right). \end{eqnarray*}
 If \begin{equation}\label{eq:2}
 p\ge {n\choose k}p^{-n+2k}+1,\end{equation}
  then  $N_p(n,k)>0$.  Since
  \begin{equation}\label{eq:2}
(1-2(\Gl-\Ge))\log_p-H_2(\Gl-\Ge)>0, \end{equation}
the equation \eqref{eq:2} holds for sufficiently large $n$. This implies that $\Gl_p\ge \Gl-\Ge$ for any small $\Ge$. Letting $\Ge$ tend to $0$ gives the desired result.
}\end{proof}

Based on Theorem \ref{thm:4}, we provide a table for lower bounds on $\Gl_p$ for small $p$ below.

\begin{center}
Table IV\\ \medskip
\begin{tabular}{||c|c|c|c|c|c|c|c||} \hline
$p$ &$2$ &$3$ & $5$ & $7$ & $11$ & $13$ &$17$ \\ \hline
$\Gl_p$ &$0.1705$ &$0.2461$ & $0.3081$ & $0.3360$ & $0.3634$ & $0.3714$ &$0.3821$  \\ \hline

\end{tabular}
\end{center}

Next let us derive a lower bound on $\Gl_p$ from self-dual codes via Corollary \ref{cor:1}.
\begin{thm}\label{thm:5}  One has
\[\lambda_p \ge H_p^{-1}\left(\frac{1}{2}\right),\] where  $H^{-1}_p(y)$ is the inverse function of $H_p(x)$.
\end{thm}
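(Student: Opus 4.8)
The goal is to prove $\lambda_p \ge H_p^{-1}(1/2)$, where the bound arises from the existence of good self-dual codes. My plan is to invoke Corollary \ref{cor:1}, which says that a $p$-ary $[n,n/2,\ge k+1]$-self-dual code yields a $k$-uniform $n$-qudit quantum state of level $p$. Thus it suffices to show that self-dual codes exist whose relative minimum distance $\delta = (k+1)/n$ approaches $H_p^{-1}(1/2)$ as $n\to\infty$. The asymptotic Gilbert--Varshamov-type existence bound for $p$-ary self-dual codes guarantees, for any $\epsilon>0$, an infinite family of self-dual codes of rate $1/2$ and relative distance at least $H_p^{-1}(1/2)-\epsilon$; the relevant threshold is precisely where the GV volume bound $H_p(\delta)$ equals the co-rate $1-R = 1/2$.

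First I would fix a small $\epsilon\in(0,H_p^{-1}(1/2))$ and set $\delta = H_p^{-1}(1/2)-\epsilon$. By the Gilbert--Varshamov bound specialized to self-dual codes, for all sufficiently large even $n$ there exists a $p$-ary self-dual code of length $n$, dimension $n/2$, and minimum distance $d\ge \delta n$. I would then take $k = \lfloor \delta n\rfloor - 1$, so that the code has minimum distance at least $k+1$ and, being self-dual, satisfies $d^\perp = d\ge k+1$ as well. Applying Corollary \ref{cor:1}(i) produces a $k$-uniform $n$-qudit quantum state of level $p$, whence $k_p(n)\ge k = \lfloor \delta n\rfloor - 1$.

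Next I would pass to the asymptotic quantity. Since $k_p(n)\ge \lfloor \delta n\rfloor - 1$ holds along an infinite sequence of even $n$, dividing by $n$ and taking the limit superior gives
\[
\lambda_p = \limsup_{n\to\infty}\frac{k_p(n)}{n} \ge \lim_{n\to\infty}\frac{\lfloor \delta n\rfloor-1}{n} = \delta = H_p^{-1}\left(\frac{1}{2}\right)-\epsilon.
\]
Letting $\epsilon\to 0$ yields $\lambda_p\ge H_p^{-1}(1/2)$, as desired.

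The main obstacle is establishing the Gilbert--Varshamov existence bound for \emph{self-dual} codes with the stated rate and distance, since the GV argument for self-dual codes is more delicate than the unrestricted version: one cannot freely append rows to a generator matrix while preserving self-duality, so a direct greedy construction fails. The standard remedy is a counting argument over the ensemble of self-dual codes (for instance, averaging the number of low-weight codewords over all self-dual codes of a given length, or using a mass formula for the total number of self-dual codes), showing that a random self-dual code avoids all nonzero vectors of weight below $\delta n$ with positive probability once $H_p(\delta)<1/2$. I would cite this known asymptotic self-dual GV bound rather than reprove it, noting that the exponent $H_p(\delta)$ governs the number of vectors inside a Hamming ball and that the self-dual constraint contributes the factor $1/2$ in the rate, which is exactly the threshold $H_p(\delta)=1/2$.
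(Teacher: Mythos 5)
Your proposal is correct and follows essentially the same route as the paper: both invoke the Gilbert--Varshamov-type existence bound for $p$-ary self-dual codes (which the paper cites from MacWilliams--Sloane--Thompson) to obtain a family of $[n,n/2,\ge k+1]$-self-dual codes with $k/n\to H_p^{-1}(1/2)$, and then apply Corollary \ref{cor:1} and pass to the limit. Your version merely spells out the $\epsilon$-bookkeeping more explicitly than the paper does.
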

\begin{proof} {\rm By \cite{Mac}, there exists a family of $p$-ary self-dual code achieving the Gilbert-Vrashamov bound, i.e., there exists a family of $p$-ary $[n,n/2,\ge k+1]$-self-dual code such that
$\lim_{n\rightarrow\infty}\frac{k}{n}\rightarrow \delta$, where $H_p(\delta)=\frac{1}{2}$,
i.e., $\delta=H_p^{-1}(\frac{1}{2})$. It follows immediately that  $\lambda_p \ge \delta=H_p^{-1}(\frac{1}{2})$.
}\end{proof}
Based on Theorem \ref{thm:5}, we provide a table for lower bounds on $\Gl_p$ for small $p$ below.

\begin{center}
Table V\\ \medskip
\begin{tabular}{||c|c|c|c|c|c|c|c||} \hline
$p$ &$2$ &$3$ & $5$ & $7$ & $11$ & $13$ &$17$  \\ \hline
$\Gl_p$ &$0.110$ &$0.159$ & $0.210$ & $0.237$ & $0.268$ & $0.278$ &$0.293$   \\ \hline
\end{tabular}
\end{center}
\begin{rmk}{\rm Once again, the asymptotic result also shows that our matrix constriction given in Theorem \ref{thm:1} is in general better than the one from self-dual codes given in Theorem \ref{thm:2}.
}\end{rmk}

\subsection{Constructive bounds on $\Gl_p$}

\begin{defn} {\rm Let $p$ be a prime and let $r\ge 2$ be an integer.
A $\ZZ_p$-basis $\{\alpha_1,\cdots,\alpha_r\}$  of the Galois field $\rGF(p^r)$ is called trace-orthogonal if $\Tr(\alpha_i\alpha_j)=0$ for all $1\le i\neq j\le r$,
where $\Tr$ is the trace map from  $\rGF(p^r)$ to $\ZZ_p$.
}\end{defn}
It is well known that there always exists a  trace-orthogonal basis of $\rGF(p^r)$ over $\ZZ_p$ \cite[Chapter 5]{MP}.  Note that if $\{\alpha_1,\cdots,\alpha_r\}$ is a trace-orthogonal basis of $\rGF(p^r)$ over $\ZZ_p$, then $\Tr(\Ga_i^2)\neq 0$ for all $1\le i\le r$. Otherwise, one would have $\Tr(\Ga_i x)=0$ for all $x\in \rGF(p^r)$ which is impossible.

Now we fix a trace-orthogonal basis $\{\alpha_1,\cdots,\alpha_r\}$ of $\rGF(p^r)$ over $\ZZ_p$. Let $a_i$ denote $\Tr(\Ga_i^2)$. Then $a_i\in\ZZ_p\setminus\{0\}$. Thus, every element of $\Gb$ of $\rGF(p^r)$ can be written as a linear combination $\Gb=\sum_{i=1}^rb_i\Ga_i$ with $b_i\in\ZZ_p$. We denote by $\pi(\Ga)$ and $\pi^{\perp}(\Ga)$ the vectors $(b_1,b_2,\dots,b_r)\in\ZZ_p^r$ and $(a_1b_1,a_2b_2,\dots,a_rb_r)\in\ZZ_p^r$, respectively.
 Extend $\pi$ and $\pi^{\perp}$ to the maps from $\rGF(p^r)^n$ to $\ZZ_p^{rn}$ given by
\begin{eqnarray*}\label{eq:4}&&\pi(u_1,u_2,\dots,u_n)=(\pi(u_1),\pi(u_2),\dots,\pi(u_n));\qquad \\ && \pi^{-1}(u_1,u_2,\dots,u_n)=(\pi^{\perp}(u_1),\pi^{\perp}(u_2),\dots,\pi^{\perp}(u_n)).\end{eqnarray*}

\begin{lemma}\label{lm:3} Let $C$ be a linear code of length $n$ over  $\rGF(p^r)$.   If $C^{\perp}$ is the dual code of $C$, then $\pi^{\perp}(C^{\perp})$ is the dual code of $\pi(C)\in\ZZ_p^{rn}$.
\end{lemma}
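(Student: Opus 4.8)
The plan is to reduce the whole statement to a single bilinear identity that converts the $\ZZ_p$-Euclidean pairing on $\ZZ_p^{rn}$ into the trace of the $\rGF(p^r)$-Euclidean pairing on $\rGF(p^r)^n$. First I would record the elementary structural facts: both $\pi$ and $\pi^{\perp}$ are $\ZZ_p$-linear bijections from $\rGF(p^r)$ to $\ZZ_p^r$. For $\pi$ this is just the coordinate isomorphism attached to the fixed basis; for $\pi^{\perp}$ it uses in addition that each $a_i=\Tr(\Ga_i^2)$ is nonzero (noted just before the lemma), so that $b_i\mapsto a_ib_i$ is a bijection of $\ZZ_p$. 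Extending coordinatewise, $\pi$ and $\pi^{\perp}$ are $\ZZ_p$-linear bijections $\rGF(p^r)^n\to\ZZ_p^{rn}$; in particular they send $\rGF(p^r)$-linear codes to $\ZZ_p$-linear codes and preserve $\ZZ_p$-dimension.

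The heart of the argument is the following computation. Writing $\Gb=\sum_i b_i\Ga_i$ and $\Gg=\sum_j c_j\Ga_j$, trace-orthogonality gives
\[\Tr(\Gb\Gg)=\sum_{i,j}b_ic_j\Tr(\Ga_i\Ga_j)=\sum_{i=1}^r a_ib_ic_i,\]
which is exactly the standard $\ZZ_p$-inner product $\pi^{\perp}(\Gb)\cdot\pi(\Gg)$, since $\pi^{\perp}(\Gb)=(a_1b_1,\dots,a_rb_r)$ and $\pi(\Gg)=(c_1,\dots,c_r)$. Summing over the $n$ coordinates yields, for all $\bu,\bv\in\rGF(p^r)^n$,
\[\pi^{\perp}(\bu)\cdot\pi(\bv)=\Tr\left(\sum_{\ell=1}^n u_\ell v_\ell\right).\]
This identity is where trace-orthogonality and the weights $a_i$ are genuinely used, and it is the step I would regard as the crux; everything else is formal, so the only real subtlety is recognizing that the twist by $a_i$ in $\pi^{\perp}$ is precisely what makes the $\ZZ_p$-pairing of images equal the trace of the $\rGF(p^r)$-pairing of originals.

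Given the identity, one inclusion is immediate: if $\bu\in C^{\perp}$ then $\sum_\ell u_\ell v_\ell=0$ for every $\bv\in C$, hence its trace vanishes, hence $\pi^{\perp}(\bu)\cdot\pi(\bv)=0$ for all $\bv\in C$; since $\pi(C)$ is spanned over $\ZZ_p$ by the vectors $\pi(\bv)$, this shows $\pi^{\perp}(\bu)\in(\pi(C))^{\perp}$ and therefore $\pi^{\perp}(C^{\perp})\subseteq(\pi(C))^{\perp}$. I would then close by a dimension count over $\ZZ_p$. If $\dim_{\rGF(p^r)}C=k$, then $\dim_{\ZZ_p}\pi(C)=rk$, so the Euclidean dual $(\pi(C))^{\perp}$ in $\ZZ_p^{rn}$ has $\ZZ_p$-dimension $rn-rk=r(n-k)$. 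On the other hand $\dim_{\rGF(p^r)}C^{\perp}=n-k$, and since $\pi^{\perp}$ is a $\ZZ_p$-isomorphism we get $\dim_{\ZZ_p}\pi^{\perp}(C^{\perp})=r(n-k)$ as well. The two $\ZZ_p$-subspaces thus have equal dimension with one contained in the other, so $\pi^{\perp}(C^{\perp})=(\pi(C))^{\perp}$, as claimed.
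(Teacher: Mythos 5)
Your proposal is correct and follows essentially the same route as the paper: the same trace-orthogonality computation showing $\pi^{\perp}(\bu)\cdot\pi(\bv)=\Tr\left(\sum_{\ell}u_\ell v_\ell\right)$, which gives the containment, followed by the same dimension count over $\ZZ_p$ to upgrade it to equality. You merely spell out two details the paper leaves implicit, namely that $\pi^{\perp}$ is a bijection because each $a_i\neq 0$ and the explicit $r(n-k)$ dimension bookkeeping.
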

\begin{proof}{\rm Let $\bu=(u_1,u_2,\dots,u_n)\in C$ and $\bv=(v_1,v_2,\dots,v_n)\in C^{\perp}$. Then we have
\begin{eqnarray*}0&=&\Tr\left(\sum_{i=1}^nu_iv_i\right)=\sum_{i=1}^n\sum_{j=1}^r\sum_{\ell=1}^ru_{ij}v_{i\ell}\Tr(\Ga_j\Ga_{\ell})\\&=&\sum_{i=1}^n\sum_{j=1}^ru_{ij}v_{ij}a_j
=\pi(\bu)\cdot\pi^{\perp}(\bv).\end{eqnarray*}
This means that $\pi(C)$ and $\pi^{\perp}(C^{\perp})$ are orthogonal. Furthermore,  it is easy to see that the sum of their dimensions over $\ZZ_p$ is $rn$. This completes the proof.
}\end{proof}

It is a well-known result from algebraic geometry codes that, for any prime power $q$, there exists a  family of $q^2$-ary $[n,n/2,\ge k+1]$ codes $\{C\}$ such that $C^{\perp}$ also have the same parameters $[n,n/2,\ge k+1]$ and $\lim_{n\rightarrow\infty}\frac kn=\frac12-\frac{1}{q-1}$ (see \cite{Stich}). Furthermore, this family can be constructed in polynomial times.

\begin{thm}\label{thm:6} For any $p\ge 5$, one has a constructive lower bound on $\Gl_p$ given by
\[\lambda_p\ge\frac14-\frac1{2(p-1)}.\]
\end{thm}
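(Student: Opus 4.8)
The plan is to start from the family of algebraic geometry codes over $\rGF(p^2)$ quoted just before the statement and push it down to $\ZZ_p$ via the trace construction of Lemma \ref{lm:3}, so that Theorem \ref{thm:2} can be applied. Concretely, I take $q=p$ (so that $\rGF(q^2)=\rGF(p^2)$ and the expansion degree is $r=2$) and use the family of $p^2$-ary $[n,n/2,\ge k+1]$ codes $C$ whose duals $C^{\perp}$ have the same parameters $[n,n/2,\ge k+1]$ and satisfy $k/n\to \tfrac12-\tfrac1{p-1}$. Fixing a trace-orthogonal basis $\{\Ga_1,\Ga_2\}$ of $\rGF(p^2)$ over $\ZZ_p$ and applying the associated map $\pi$ turns $C$ into a $p$-ary code $\pi(C)$ of length $2n$ and $\ZZ_p$-dimension $2\cdot(n/2)=n$.

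The heart of the argument is to check that both the minimum distance and the dual distance of $\pi(C)$ are still at least $k+1$. Since $\pi$ acts coordinatewise as the $\ZZ_p$-linear isomorphism $\rGF(p^2)\to\ZZ_p^2$ given by coordinates in the basis, a nonzero symbol $u_i$ is sent to a nonzero block $\pi(u_i)\in\ZZ_p^2$; hence the support of $\pi(\bu)$ dominates that of $\bu$ and $\wt(\pi(\bu))\ge \wt(\bu)$ for every $\bu$, which gives $d(\pi(C))\ge d(C)\ge k+1$. For the dual I invoke Lemma \ref{lm:3}, which identifies $\pi(C)^{\perp}$ with $\pi^{\perp}(C^{\perp})$. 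Because $\pi^{\perp}$ differs from $\pi$ only by the nonzero scalars $a_i=\Tr(\Ga_i^2)$ in each coordinate, it is again a coordinatewise bijection, so the same weight-domination argument yields $d(\pi(C)^{\perp})=d(\pi^{\perp}(C^{\perp}))\ge d(C^{\perp})\ge k+1$.

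With $\min\{d,d^{\perp}\}\ge k+1$ for the length-$2n$ code $\pi(C)$, Theorem \ref{thm:2} produces a $k$-uniform $2n$-qudit state of level $p$. Writing $N=2n$ for the new length and passing to the limit gives
\[\Gl_p\ge \limsup_{N\to\infty}\frac{k}{N}=\frac12\left(\frac12-\frac1{p-1}\right)=\frac14-\frac1{2(p-1)},\]
which is exactly the claimed bound; note that it is positive precisely when $p\ge 5$, matching the hypothesis. Constructivity is automatic, since the algebraic geometry family is polynomial-time constructible and the trace map $\pi$ is explicit. I expect the only delicate point to be the distance bookkeeping above, in particular ensuring that the \emph{dual} distance survives the descent to $\ZZ_p$, which is exactly the role Lemma \ref{lm:3} plays; I would also remark that the choice $q=p$ (rather than a larger power of $p$) is what optimizes the final ratio, since inflating the field to $\rGF(p^{2s})$ forces a proportionally longer $\ZZ_p$-code of length $2sn$ and lowers $k/N$ to $\tfrac{1}{2s}\bigl(\tfrac12-\tfrac1{p^s-1}\bigr)$, maximized at $s=1$.
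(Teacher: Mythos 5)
Your proposal is correct and follows essentially the same route as the paper: descend the $p^2$-ary algebraic geometry code family to $\ZZ_p$ via the trace-orthogonal basis maps $\pi$ and $\pi^{\perp}$, use Lemma \ref{lm:3} to control the dual, and apply Theorem \ref{thm:2}. Your explicit weight-domination argument for why both the distance and the dual distance survive the descent fills in a step the paper leaves implicit, but the construction and the limit computation are identical.
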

\begin{proof}{\rm
Consider a family of $p^2$-ary $[n,n/2,\ge k+1]$ codes $\{C\}$ such that $C^{\perp}$ also have the same parameters $[n,n/2,\ge k+1]$ and  $\lim_{n\rightarrow\infty}\frac kn=\frac12-\frac{1}{p-1}$. Consider a trace-orthogonal basis of  $\rGF(p^2)$ over $\ZZ_p$ and associated maps $\pi$ and $\pi^{\perp}$ defined in \eqref{eq:4}. Then both $\pi(C)$ and $\pi^{\perp}(C^{\perp})$ are $p$-ary $[2n,n,\ge k+1]$-linear code. By Theorem \ref{thm:2}, we have a $k$-uniform $rn$-qudit quantum state of level $p$. This gives $\Gl_p\ge \lim_{n\rightarrow\infty}\frac{k}{2n}=\frac14-\frac1{2(p-1)}$. This completes the proof.
}\end{proof}
When $p$ is small, the bound given in Theorem \ref{thm:6} can be further improved by considering algebraic geometry codes over larger extension $\rGF(p^{2t})$ for $t\ge 2$.

\begin{thm} \label{thm:7}
For any $t\ge 2$, one has a  constructive lower bound on $\Gl_p$ given by
\[\lambda_p\ge \frac{1}{2t} \left(\frac12-\frac{1}{p^t-1}\right).\]
\end{thm}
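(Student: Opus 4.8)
The plan is to mimic the proof of Theorem \ref{thm:6}, but to expand algebraic geometry codes defined over the larger field $\rGF(p^{2t})$ rather than over $\rGF(p^2)$. First I would invoke the quoted algebraic geometry fact with the prime power $q=p^t$: there exists a family of $q^2=p^{2t}$-ary codes $\{C\}$ with parameters $[n,n/2,\ge k+1]$ whose duals $C^{\perp}$ share the same parameters $[n,n/2,\ge k+1]$ and satisfy $\lim_{n\to\infty}k/n=\frac12-\frac{1}{p^t-1}$. These play exactly the role the $p^2$-ary codes played in Theorem \ref{thm:6}, and the family is explicit (polynomial-time constructible), which is what makes the resulting bound constructive.

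Next I would fix a trace-orthogonal basis $\{\alpha_1,\dots,\alpha_{2t}\}$ of $\rGF(p^{2t})$ over $\ZZ_p$ (which exists by \cite[Chapter 5]{MP}), so here the extension degree is $r=2t$, and form the associated expansion maps $\pi$ and $\pi^{\perp}$ as in \eqref{eq:4}. Applying these maps to the family above, I claim $\pi(C)$ is a $p$-ary $[2tn,tn,\ge k+1]$-linear code and $\pi^{\perp}(C^{\perp})$ is likewise a $p$-ary $[2tn,tn,\ge k+1]$-linear code. Length and dimension are immediate: each $\rGF(p^{2t})$-symbol expands to $2t$ coordinates over $\ZZ_p$, so the length becomes $2tn$ and the $\ZZ_p$-dimension becomes $2t\cdot(n/2)=tn$. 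For the minimum distance, observe that $\pi$ restricted to a single coordinate is a bijection $\rGF(p^{2t})\to\ZZ_p^{2t}$, so a nonzero symbol maps to a nonzero block; hence each of the $\ge k+1$ nonzero symbols of a codeword of $C$ contributes at least one nonzero $\ZZ_p$-coordinate, giving minimum distance $\ge k+1$. The same argument applies to $\pi^{\perp}$, since $\pi^{\perp}(\Gb)=(a_1b_1,\dots,a_{2t}b_{2t})$ with every $a_j=\Tr(\alpha_j^2)\neq 0$, so nonzero symbols again map to nonzero blocks.

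With both distances controlled, I would invoke Lemma \ref{lm:3} to identify $\pi^{\perp}(C^{\perp})$ as the Euclidean dual of $\pi(C)$. Thus $\pi(C)$ is a $p$-ary code of length $2tn$ whose minimum distance $d$ and dual distance $d^{\perp}$ both exceed $k$, i.e. $\min\{d,d^{\perp}\}\ge k+1$, and Theorem \ref{thm:2} produces a $k$-uniform $(2tn)$-qudit quantum state of level $p$. Letting $n\to\infty$ along this family then yields $\lambda_p\ge\lim_{n\to\infty}\frac{k}{2tn}=\frac{1}{2t}\left(\frac12-\frac{1}{p^t-1}\right)$, as claimed.

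The only genuinely delicate point is the distance bookkeeping under the two expansions, namely verifying that neither the minimum distance nor the dual distance collapses when passing from $\rGF(p^{2t})$ down to $\ZZ_p$. This is precisely where trace-orthogonality (ensuring each $a_j\neq 0$) and the coordinate-wise bijectivity of $\pi$ are essential; everything else is a routine transcription of the $t=1$ argument. Indeed the same proof covers $t=1$ and recovers Theorem \ref{thm:6}, so the restriction $t\ge 2$ merely avoids redundancy.
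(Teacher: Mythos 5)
Your proposal is correct and follows essentially the same route as the paper: expand the asymptotically good $p^{2t}$-ary AG codes and their duals via the trace-orthogonal-basis maps $\pi,\pi^{\perp}$, check via Lemma \ref{lm:3} that duality is preserved, and apply Theorem \ref{thm:2}. The extra care you take in verifying that the minimum distance does not drop under the expansion (nonzero symbols map to nonzero blocks because each $a_j\neq 0$) is a detail the paper leaves implicit, but the argument is the same.
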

\begin{proof}{\rm  The proof is almost identical to the one of Theorem \ref{thm:6} except we consider algebraic geometry codes over larger extension.

Consider a family of $p^{2t}$-ary $[n,n/2,\ge k+1]$ codes $\{C\}$ such that $C^{\perp}$ also have the same parameters $[n,n/2,\ge k+1]$ and  $\lim_{n\rightarrow\infty}\frac kn=\frac12-\frac{1}{p^t-1}$. Consider a trace-orthogonal basis of  $\rGF(p^{2t})$ over $\ZZ_p$ and associated maps $\pi$ and $\pi^{\perp}$ defined in \eqref{eq:4}. Then both $\pi(C)$ and $\pi^{\perp}(C^{\perp})$ are $p$-ary $[2tn,tn,\ge k+1]$-linear code. By Theorem \ref{thm:2}, we have a $k$-uniform $2tn$-qudit quantum state of level $p$. This gives $\Gl_p\ge \lim_{n\rightarrow\infty}\frac{k}{2tn}=\frac1{2t}\left(\frac12-\frac1{p^t-1}\right)$. This completes the proof.
}\end{proof}

Finally, we provide  a table for  constructive lower bounds on $\Gl_p$ for primes $p=2,3,5,\dots, 23$. Note that the value $t$  to obtain the optimal lower bound in Theorem \ref{thm:2}  may vary as $p$ varies.

\begin{center}
Table V\\ \medskip
\begin{tabular}{||c|c|c|c|c|c|c|c|||} \hline
$p$ &$2$ &$3$ & $5$ & $7$ & $11$ & $13$ &$17$   \\ \hline
$\Gl_p$ &$0.060$ &$0.094$ & $0.125$ & $0.167$ & $0.2$ & $0.208$ &$0.219$  \\ \hline\hline
$t$ &$3$ &$2$ & $1$ & $1$ & $1$ & $1$ &$1$  \\ \hline
\end{tabular}
\end{center}

\begin{acknowledgments}
Keqing Feng is supported by NSFC No.11471178,11571007 and the Tsinghua National Lab. on Information Science and Technology.
 Lingfei Jin is supported in part by  Shanghai Sailing Program  under the grant 15YF1401200 and by National Natural Science Foundation of China under Grant 11501117.
Chaoping Xing and Chen Yuan
are supported by the Singapore Ministry of Education Tier 1 under Grant
RG20/13.

\end{acknowledgments}

\appendix

\section{ An example for k-uniform from symmetric matrix}
In this appendix, we provide one more  matrix that gives a 3-uniform 8-qudit quatum states of level 2. Thus, the quantum states can be explicitly written down as long as the corresponding matrix is provided by  Theorem \ref{thm:1}.

\[
\left(
  \begin{array}{cccccccc}
0& 1& 1& 0& 0& 1& 1& 0\\
1& 0& 1& 0& 0& 1& 0& 1\\
1& 1& 0& 1& 1& 0& 1& 1\\
0& 0& 1& 0& 1& 0& 0& 1\\
0& 0& 1& 1& 0& 1& 1& 1\\
1& 1& 0& 0& 1& 0& 0& 1\\
1& 0& 1& 0& 1& 0& 0& 0\\
0& 1& 1& 1& 1& 1& 0& 0\\
  \end{array}
\right)\]

\begin{eqnarray*}
&&|\Phi\rangle=|00000000\rangle +|10000000\rangle +|01000000\rangle -|11000000\rangle \\&&+|00100000\rangle
-|10100000\rangle-|01100000\rangle -|11100000\rangle  \\&&+|00010000\rangle +|10010000\rangle +|01010000\rangle
-|11010000\rangle  \\&&-|00110000\rangle +|10110000\rangle +|01110000\rangle +|11110000\rangle \\&&+|00001000\rangle +|10001000\rangle +|01001000\rangle -|11001000\rangle   \\&&
-|00101000\rangle +|10101000\rangle +|01101000\rangle +|11101000\rangle \\&&-|00011000\rangle -|10011000\rangle -|01011000\rangle +|11011000\rangle  \\&&-|00111000\rangle +|10111000\rangle +|01111000\rangle +|11111000\rangle \\&&+|00000100\rangle -|10000100\rangle -|01000100\rangle -|11000100\rangle \\&&+|00100100\rangle +|10100100\rangle +|01100100\rangle -|11100100\rangle \\&&+|00010100\rangle -|10010100\rangle -|01010100\rangle -|11010100\rangle \\&&-|00110100\rangle -|10110100\rangle -|01110100\rangle +|11110100\rangle \\&&-|00001100\rangle +|10001100\rangle +|01001100\rangle +|11001100\rangle \\&&+|00101100\rangle +|10101100\rangle +|01101100\rangle -|11101100\rangle \\&&+|00011100\rangle -|10011100\rangle -|01011100\rangle -|11011100\rangle \\&&+|00111100\rangle +|10111100\rangle +|01111100\rangle -|11111100\rangle \\&&+|00000010\rangle -|10000010\rangle +|01000010\rangle +|11000010\rangle \\&&-|00100010\rangle -|10100010\rangle +|01100010\rangle -|11100010\rangle \\&&+|00010010\rangle -|10010010\rangle +|01010010\rangle +|11010010\rangle \\&&+|00110010\rangle +|10110010\rangle -|01110010\rangle +|11110010\rangle \\&&-|00001010\rangle +|10001010\rangle -|01001010\rangle -|11001010\rangle \\&&-|00101010\rangle -|10101010\rangle +|01101010\rangle -|11101010\rangle \\&&+|00011010\rangle -|10011010\rangle +|01011010\rangle +|11011010\rangle \\&&-|00111010\rangle -|10111010\rangle +|01111010\rangle -|11111010\rangle \\&&+|00000110\rangle +|10000110\rangle -|01000110\rangle +|11000110\rangle \\&&-|00100110\rangle +|10100110\rangle -|01100110\rangle -|11100110\rangle \\&&+|00010110\rangle +|10010110\rangle -|01010110\rangle +|11010110\rangle \\&&+|00110110\rangle -|10110110\rangle +|01110110\rangle +|11110110\rangle \\&&+|00001110\rangle +|10001110\rangle -|01001110\rangle +|11001110\rangle \\&&+|00101110\rangle -|10101110\rangle +|01101110\rangle +|11101110\rangle \\&&-|00011110\rangle -|10011110\rangle +|01011110\rangle -|11011110\rangle \end{eqnarray*}
\begin{eqnarray*}&&+|00111110\rangle -|10111110\rangle +|01111110\rangle +|11111110\rangle \\&&+|00000001\rangle +|10000001\rangle -|01000001\rangle +|11000001\rangle \\&&-|00100001\rangle +|10100001\rangle -|01100001\rangle -|11100001\rangle \\&&-|00010001\rangle -|10010001\rangle +|01010001\rangle -|11010001\rangle
\\&&-|00110001\rangle +|10110001\rangle -|01110001\rangle -|11110001\rangle \\&&-|00001001\rangle -|10001001\rangle +|01001001\rangle -|11001001\rangle \\&&-|00101001\rangle +|10101001\rangle -|01101001\rangle -|11101001\rangle \\&&-|00011001\rangle -|10011001\rangle +|01011001\rangle -|11011001\rangle \\&&+|00111001\rangle -|10111001\rangle +|01111001\rangle +|11111001\rangle \\&&-|00000101\rangle +|10000101\rangle -|01000101\rangle -|11000101\rangle \\&&+|00100101\rangle +|10100101\rangle -|01100101\rangle +|11100101\rangle \\&&+|00010101\rangle -|10010101\rangle +|01010101\rangle +|11010101\rangle \\&&+|00110101\rangle +|10110101\rangle -|01110101\rangle +|11110101\rangle \\&&-|00001101\rangle +|10001101\rangle -|01001101\rangle -|11001101\rangle \\&&-|00101101\rangle -|10101101\rangle +|01101101\rangle -|11101101\rangle \\&&-|00011101\rangle +|10011101\rangle -|01011101\rangle -|11011101\rangle \\&&+|00111101\rangle +|10111101\rangle -|01111101\rangle +|11111101\rangle \\&&+|00000011\rangle -|10000011\rangle -|01000011\rangle -|11000011\rangle \\&&+|00100011\rangle +|10100011\rangle +|01100011\rangle -|11100011\rangle \\&&-|00010011\rangle +|10010011\rangle +|01010011\rangle +|11010011\rangle \\&&+|00110011\rangle +|10110011\rangle +|01110011\rangle -|11110011\rangle \\&&+|00001011\rangle -|10001011\rangle -|01001011\rangle -|11001011\rangle
\\&&-|00101011\rangle -|10101011\rangle -|01101011\rangle +|11101011\rangle \\&&+|00011011\rangle -|10011011\rangle -|01011011\rangle -|11011011\rangle \\&&+|00111011\rangle +|10111011\rangle +|01111011\rangle -|11111011\rangle \\&&-|00000111\rangle -|10000111\rangle -|01000111\rangle +|11000111\rangle \\&&-|00100111\rangle +|10100111\rangle +|01100111\rangle +|11100111\rangle \\&&+|00010111\rangle +|10010111\rangle +|01010111\rangle -|11010111\rangle \\&&-|00110111\rangle +|10110111\rangle +|01110111\rangle +|11110111\rangle \\&&+|00001111\rangle +|10001111\rangle +|01001111\rangle -|11001111\rangle \\&&-|00101111\rangle +|10101111\rangle +|01101111\rangle +|11101111\rangle \\&&+|00011111\rangle +|10011111\rangle +|01011111\rangle -|11011111\rangle \\&&+|00111111\rangle -|10111111\rangle -|01111111\rangle -|11111111\rangle \\&&
\end{eqnarray*}

\bibliography{apssamp}

\end{document}